\newcommand{\qed}{\hfill \ensuremath{\Box}}
\newtheorem{lemmx}{Lemma}
\newtheorem{thmx}{Theorem}
\newenvironment{proof}{{\bf Proof:}}{\qed}
\newcommand{\E}{ \mathbb{E}}
\renewcommand{\Pr}{\mbox{\rm \bf{Pr}}}
\begin{document}
\title{Triangle counting in dynamic graph streams}
\date{}
\author[1]{Laurent Bulteau \thanks{\tt{l.bulteau@gmail.com}, Supported by the Alexander von
    Humboldt Foundation, Bonn, Germany.}}
\author[1]{Vincent Froese\thanks{\tt{vincent.froese@tu-berlin.de}, Supported by the DFG project DAMM (NI
    369/13).}}
\author[2]{Konstantin Kutzkov\thanks{\tt{kutzkov@gmail.com}, Work done while the author was at IT University of Copenhagen and supported by the Danish National Research Foundation under the Sapere Aude program.}}
\author[3]{Rasmus Pagh\thanks{\tt{pagh@itu.dk}, Supported by the Danish National Research Foundation under the Sapere Aude program.}}
\affil[1]{Technische Universit\"at Berlin, Germany}
\affil[2]{NEC Laboratories Europe, Heidelberg, Germany}
\affil[3]{IT University of Copenhagen, Denmark}
\maketitle

\begin{abstract}
Estimating the number of triangles in graph streams using a limited amount of memory has become a popular topic in the last decade. Different variations of the problem have been studied, depending on whether the graph edges are provided in an arbitrary order or as incidence lists. However, with a few exceptions, the algorithms have considered {\em insert-only} streams.
We present a new algorithm estimating the number of triangles in {\em dynamic} graph streams where edges can be both inserted and deleted. We show that our algorithm achieves better time and space complexity than previous solutions for various graph classes, for example sparse graphs with a relatively small number of triangles. Also, for graphs with constant transitivity coefficient, a common situation in real graphs, this is the first algorithm achieving constant processing time per edge.  The result is achieved by a novel approach combining sampling of vertex triples and sparsification of the input graph. In the course of the analysis of the algorithm we present a lower bound on the number of pairwise independent 2-paths in general graphs which might be of independent interest. At the end of the paper we discuss lower bounds on the space complexity of triangle counting algorithms that make no assumptions on the structure of the graph.
\end{abstract}
\section{Introduction}

Many relationships between real life objects can be abstractly represented as graphs. The discovery of certain structural properties in a graph, which abstractly describes a given real-life problem, can often provide important insights into the nature of the original problem. The number of triangles, and the closely related clustering and transitivity coefficients, have proved to be an important measure used in applications ranging from social network analysis and spam detection to motif detection in protein interaction networks. We refer to~\cite{sparsifiers} for a detailed discussion on the applications of triangle counting. 

The best known algorithm for triangle counting in the RAM model runs in time $O(m^{\frac{2 \omega}{\omega+1}})$~\cite{exact_counting} where $\omega$ is the matrix multiplication exponent, the best known bound is $\omega = 2.3727$~\cite{virgi_mm}. However, this algorithm is mainly of theoretical importance since exact fast matrix multiplication algorithms do not admit an efficient implementation for input matrices of reasonable size.

The last decade has witnessed a rapid growth of available data. This has led to a shift in attitudes in algorithmic research and solutions storing the whole input in main memory are not any more considered a feasible choice for many real-life problems. Classical algorithms have been adjusted in order to cope with the new requirements and many new techniques have been developed. This has led to the {\em streaming} model of computation where only a single scan of the data is possible. Both efficient algorithms and impossibility results have shed light on the computational complexity of many problems in the model~\cite{muthu_survey}. %A relaxation of the model is the so called {\em semi-streaming} model when one is allowed a small number of reads of the input. For many real-life problems this is a feasible setting because one can afford to store the entire input on an external device such that it can be accessed sequentially and each scan is considered expensive. %However, for real streaming applications the semi-streaming applicat
%
%\subsection{Previous work}
%\paragraph{Exact triangle counting.}
%
\\\\
{\bf Approximate triangle counting in streamed graphs.} 
For many applications one is satisfied with a good approximation of the number of triangles instead of their exact number, thus researchers have designed randomized approximation algorithms returning with high probability a precise estimate using only small amount of main memory. Two models of streamed graphs have been considered in the literature. In the {\em incidence list stream} model the edges incident to each vertex arrive consecutively and in the {\em adjacency stream} model edges arrive in arbitrary order. Also, a distinction has been made between algorithms using only a single pass over the input, and algorithms assuming that the input graph can be persistently stored on a secondary device and multiple passes are allowed. %The problem has been extensively studied, and the best known results are summarized in Table XXX. 
A simple approach for estimating the number of triangles in insert-only streams is to sample a certain number of 2-paths, then compute the ratio of 2-paths in the sample that are completed to triangles and multiply the obtained number with the total number of 2-paths in the graphs. For incidence list streams this is easy since we can assume that the stream consists of (an implicit representation of) all 2-paths~\cite{buriol_et_al_1}. For  the more difficult model of adjacency streams where edges arrive in arbitrary order the approach was adjusted such that we sample a random 2-path~\cite{kdd_triangles,pavan_et_al}. %In a recent work Jha et al.~\cite{kdd_triangles} showed how to adjust the approach to the more difficult model of adjacency streams where edges arrive in arbitrary order. 
The one-pass algorithm with the best known space complexity and constant processing time per edge for adjacency streams is due to Pavan et al.~\cite{pavan_et_al}, and when several passes are allowed -- by Kolountzakis et al.~\cite{semistream_counting}. For a more detailed overview of results and developed techniques we refer to~\cite{sparsifiers}.
\\
Dynamic graph streams have a wider range of applications. Consider for example a social network like Facebook where one is allowed to befriend and ``unfriend'' other members, or join and leave groups of interest. Estimating the number of triangles in a network is a main building block in algorithms for the detection of emerging communities~\cite{comm_det}, and thus it is required that triangle counting algorithms can also handle edge deletions. The problem of designing triangle counting algorithms for dynamic streams matching the space and time complexity of algorithms for insert-only streams has been presented as an open question in the 2006 IITK Workshop on Algorithms for Data Streams~\cite{open_question}. The best known algorithms for insert-only streams work by sampling a non-empty subgraph on three vertices from the stream (e.g. an edge $(u, v)$ and a vertex $w$). Then one checks whether the arriving edges will complete the sampled subgraph to a triangle (we look for $(u,w)$ and $(v,w)$). The approach does not work for dynamic streams because an edge in the sampled subgraph might be deleted later. Proposed solutions~\cite{ahn_et_al,cycles_counting,iran_triangles} have explored different ideas. These approaches, however, only partially resolve the open problem from~\cite{open_question} because of high processing time per edge update, see Section~\ref{sec:results} for more details.

\paragraph{Our contribution.}
In this work we propose a method to adjust sampling to work in dynamic streams and show that for graphs with constant transitivity coefficient, a ubiquitous assumption for real-life graphs, we can achieve constant processing time per edge.  
We also show that \emph{some} assumption on the graph is needed to be able to estimate small triangle counts, by showing a lower bound on the space usage in terms of the number of edges and triangles, matching the upper bound of Manjunath et al.~\cite{cycles_counting} for constant approximation factor.

At a very high level, the main technical contribution of the present work can be summarized as follows:
%\\  
%A simple approach for estimating the number of triangles in insert-only streams is to sample a certain number of 2-paths, then compute the ratio of 2-paths in the sample that are completed to triangles and multiply the obtained number with the total number of 2-paths in the graphs. For incidence list streams this is easy since we can assume that the stream consists of (an implicit representation of) all 2-paths~\cite{buriol_et_al_1}. %In a recent work Jha et al.~\cite{kdd_triangles} showed how to adjust the approach to the more difficult model of adjacency streams where edges arrive in arbitrary order. 
For dynamic graph streams sampling-based approaches fail because we don't know how many of the sampled subgraphs will survive after edges have been deleted. On the other hand, graph sparsification approaches~\cite{pagh_tsour,doulion,sparsifiers} can handle edge deletions but the theoretical guarantees on the complexity of the algorithms depend on specific properties of the underlying graph, e.g., the maximum number of triangles an edge is part of. The main contribution in the present work is a novel technique for sampling 2-paths {\em after} the stream has been processed. It is based on the combination of standard 2-path sampling with graph sparsification. The main technical challenge is to show that sampling at random a 2-path in a sparsified graph is (almost) equivalent to sampling at random a 2-path in the original graph. In the course of the analysis, we also obtain combinatorial results about general graphs that might be of independent interest.
\\\\
{\bf Organization of the paper.} In Section~\ref{sec:prel} we give necessary definitions and in Section~\ref{sec:results} we summarize our results. In Section~\ref{sec:upper} we present the new approach, in Section~\ref{sec:code} we describe the algorithm and in Section~\ref{sec:analysis} we prove the main result. In Section~\ref{sec:lower} we provide a lower bound on the space complexity of triangle counting algorithms. We discuss the complexity of different triangle counting algorithms for several real graphs in Section~\ref{sec:dtc_compare}. The paper is concluded in Section~\ref{sec:concl}.

\section{Preliminaries} \label{sec:prel}

{\bf Notation.}
A simple undirected graph without loops is denoted as $G= (V, E)$ with $V = \{1,2,\ldots,n\}$ being a set of vertices and $E$ a set of edges. The edges are provided as a stream of insertions and deletions in arbitrary order. We assume the strict turnstile model where each edge can be deleted only after being inserted. We assume that $n$ is known in advance\footnote{More generally, our results hold when the $n$ vertices come from some arbitrary universe $U$ known in advance.} and that the number of edges cannot exceed $m$. For an edge connecting the vertices $u$ and $v$ we write $(u, v)$ and $u$ and $v$ are the {\em endpoints} of the edge $(u, v)$. Vertex $u$ is {\em neighbor} of $v$ and vice versa and $N(u)$ is the set of $u$'s neighbors. We say that edge $(u, v)$ is {\em isolated} if $|N(u)|=|N(v)|=1$. We consider only edges $(u, v)$ with $u<v$. A {\em 2-path centered at $v$}, $(u, v, w)$, consists of the edges $(u, v)$ and $(v, w)$. A {$k$-clique} in $G$ is a subgraph of $G$ on $k$ vertices $v_1, \ldots, v_k$ such that $(v_i, v_j) \in E$ for all $1\le i < j \le k$. A 3-clique on $u, v, w$ is called a {\em triangle} on $u, v, w$, and is denoted as $\langle u, v, w \rangle$.  We denote by $P_2(v)$ the number of 2-paths centered at a vertex $v$, and $P_2(G)= \sum_{v\in V} P_2(v)$ and $T_3(G)$ the number of 2-paths and number of triangles in $G$, respectively. We will omit $G$ when clear from the context. 

We say that two 2-paths are {\em independent} if they have at most one common vertex.
%A {\em tree} $T = (V, E)$ is a connected graph without cycles. We define a {\em parent-child} relationship on a tree such that each vertex has at most one parent. There exists a single vertex without a parent called the {\em root} of $T$ and vertices with no children are called a {\em leaf}.  A {\em spanning tree} $T_G = (V_T, E_T)$ of $G = (V, E)$ is a tree such that $V_T = V$ and $E_T \subseteq E$. $H = (V_H, E_H)$ is said to be an induced subgraph of $G = (V, E)$ if it has exactly the edges that appear in $G$ over the same vertex set, i.e. $V_H \subseteq V, E_H \subseteq E$ and $(u, v) \in E_H$ if and only if $(u, v) \in E$.
%\\
The {\em transitivity coefficient} of $G$ is $$\alpha(G) = \frac{3T_3}{\sum_{v \in V}{d_v \choose 2}} = \frac{3T_3}{P_2},$$
i.e., the ratio of 2-paths in $G$ contained in a triangle to all 2-paths in $G$. When clear from the context, we will omit $G$.
\\\\
{\bf Hashing.} A family $\mathcal{F}$ of functions from $U$ to a finite set $S$ is {\em $k$-wise independent} if for a function $f: U \rightarrow S$ chosen uniformly at random from $\mathcal{F}$ it holds $$\Pr[f(u_1) = c_1 \wedge f(u_2) = c_2 \wedge \dots \wedge f(u_k) = c_k] = 
{1}/{s^k}$$ for $s = |S|$, distinct $u_i \in U$ and any $c_i \in S$ and $k \in \mathbb{N}$. We will call a function chosen uniformly at random from a $k$-wise independent family {\em $k$-wise independent function} and a function $f: U \rightarrow S$ {\em fully random} if $f$ is $|U|$-wise independent.  We will say that a function $f:U \rightarrow S$ {\em behaves like a fully random function} if for any set of input from $U$, with high probability $f$ has the same probability distribution as a fully random function.   

We will say that an algorithm returns an {\em $(\varepsilon, \delta)$-approximation} of some quantity $q$ if it returns a value $\tilde{q}$ such that $(1-\varepsilon)q \le \tilde{q} \le (1 + \varepsilon)q$ with probability at least $1-\delta$ for every $0 < \varepsilon, \delta < 1$. 
\\\\
{\bf Probability inequalities.}  In the analysis of the algorithm we use the following inequalities. 
\begin{itemize}

%\item {\em Markov's inequality.} Let $X$ be a random variable. Then for every $\lambda > 1$:
%$$
%\Pr[X \geq \lambda \E[X]] \leq \frac{1}{\lambda}
%$$

\item {\em Chebyshev's inequality.} Let $X$ be a random variable and $\lambda>0$. Then

$$
\Pr[|X-\mathbb{E}[X]| \geq \lambda \sigma(X)] \leq \frac{1}{\lambda^2}
$$

\item {\em Chernoff's inequality.} Let $X_1, \ldots, X_\ell$ be $\ell$ independent identically distributed Bernoulli random variables and $\mathbb{E}[X_i] = \mu$. Then for any $\varepsilon > 0$ we have 
$$
\Pr[|\frac{1}{\ell}\sum_{i=1}^\ell X_i - \mu| > \varepsilon\mu] \leq 2e^{-\varepsilon^2 \mu \ell/2}
$$
\end{itemize}

\section{Results} \label{sec:results}
%Our main result is the following. 
The following theorem is our main result.
\begin{thmx} \label{dtc:thm}
Let $G = (V, E)$ be a graph given as a stream of edge insertions and deletions with no isolated edges and vertices, $V=\{1,2,\ldots,n\}$ and $|E| \le m$. Let $P_2$, $T_3$ and $\alpha$ be the number of 2-paths, number of triangles and the transitivity coefficient of $G$, respectively. Let $\varepsilon, \delta \in (0,1)$ be user defined. Assuming fully random hash functions, there exists a one-pass algorithm running in expected space $O(\frac{\sqrt{m}}{\varepsilon^3 \alpha} \log \frac{1}{\delta})$ and $O(\frac{1}{\varepsilon^2 \alpha} \log \frac{1}{\delta})$ processing time per edge. After processing the stream, an $(\varepsilon, \delta)$-approximation of $T_3$ can be computed in expected time $O(\frac{\log n}{\varepsilon^2 \alpha} \log \frac{1}{\delta})$ and worst case time $O(\frac{\log^2 n}{\varepsilon^2 \alpha} \log \frac{1}{\delta})$ with high probability.
\end{thmx}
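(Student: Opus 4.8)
The plan is to decouple the randomness used while reading the stream from the randomness used to form the estimate, so that deletions never invalidate a sample. During the stream I would maintain $L=\Theta(\tfrac{1}{\varepsilon^{2}\alpha}\log\tfrac1\delta)$ independent \emph{vertex} sparsifications of $G$: for copy $i$ fix a hash function $h_i\colon V\to[0,1)$ behaving like a fully random function, retain a vertex $v$ iff $h_i(v)<p$ for a rate $p=\Theta((\varepsilon\sqrt m)^{-1/2})$, and store the subgraph $G'_i$ induced on the retained vertices, updated edge by edge (each update is one hash evaluation per copy plus a hash-table insertion or deletion). After the stream I would, in each copy, compute $P_2(G'_i)=\sum_v\binom{d'_v}{2}$ from the stored degrees, draw one 2-path of $G'_i$ uniformly at random (pick a center with probability proportional to $\binom{d'_v}{2}$, then two random retained neighbours), let $Z_i$ indicate whether it closes to a triangle in $G'_i$, and set $W_i=\tfrac13 p^{-3}P_2(G'_i)\,Z_i$. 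The output is a median-of-means over the $W_i$, grouped into $\Theta(\log\tfrac1\delta)$ batches of $\Theta(\tfrac{1}{\varepsilon^{2}\alpha})$ each.

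First I would verify unbiasedness by conditioning on $G'_i$. A uniform 2-path of $G'_i$ closes to a triangle with probability exactly $\alpha(G'_i)=3T_3(G'_i)/P_2(G'_i)$, so $\E[W_i\mid G'_i]=p^{-3}T_3(G'_i)$; and since a fixed triangle of $G$ survives in $G'_i$ iff its three vertices are retained, $\E[T_3(G'_i)]=p^{3}T_3$ and hence $\E[W_i]=T_3$. This same conditioning separates the error into a \emph{sampling} contribution (how far a single 2-path deviates from its conditional mean $\alpha(G'_i)$) and a \emph{sparsification} contribution (how the $W_i$ vary with the set of surviving vertices), and it makes the $W_i$ genuinely independent across copies, so their empirical mean concentrates by a single application of Chebyshev's inequality to each batch.

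The heart of the argument is the variance bound $\E[W_i^{2}]=O(T_3^{2}/\alpha)$, which is precisely what lets $\Theta(\tfrac{1}{\varepsilon^{2}\alpha})$ samples per batch yield a $(1\pm\varepsilon)$ estimate with constant failure probability, boosted to $1-\delta$ by the median. Conditioning as above gives $\E[W_i^{2}]=\tfrac13 p^{-6}\,\E[P_2(G'_i)\,T_3(G'_i)]$, so the claim reduces to the combinatorial inequality $\E[P_2(G')\,T_3(G')]=O(p^{6}P_2T_3)$. Expanding $P_2(G')\,T_3(G')=\sum_{P,t}X_PY_t$ over 2-paths $P$ and triangles $t$, with $X_P,Y_t$ their survival indicators, the contribution of a pair is $p^{6-s}$ where $s$ is the number of shared vertices; the vertex-disjoint pairs already sum to $p^{6}P_2T_3$, so everything rests on showing that overlapping pairs (those sharing one or two vertices, whose survival probability inflates to $p^{5}$ or $p^{4}$) contribute no more than this. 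I expect this to be the main obstacle, and the place where the paper's announced lower bound on the number of pairwise independent 2-paths is the decisive input: it should certify that $P_2$ is large enough relative to the number of 2-paths clustered around any single triangle, so that the inflated overlap terms are dominated by the disjoint term for the chosen $p$ — and correspondingly, that graphs with heavily shared triangles have $\alpha$ so small that $p$ is forced to $1$ and the graph is simply stored.

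It remains to account for resources. The retained edge set of one copy has expected size $O(p^{2}m)=O(\sqrt m/\varepsilon)$, so across all $L$ copies the expected space is $O(L\,p^{2}m)=O(\tfrac{\sqrt m}{\varepsilon^{3}\alpha}\log\tfrac1\delta)$, and maintaining the $L$ copies costs $O(L)=O(\tfrac{1}{\varepsilon^{2}\alpha}\log\tfrac1\delta)$ per edge update. In the query phase I would draw one weighted 2-path and run one triangle-membership test per copy, each supported by prefix sums over the degree sequence together with a dictionary on the retained edges, at $O(\log n)$ expected and $O(\log^{2}n)$ worst-case cost, for totals $O(\tfrac{\log n}{\varepsilon^{2}\alpha}\log\tfrac1\delta)$ and $O(\tfrac{\log^{2}n}{\varepsilon^{2}\alpha}\log\tfrac1\delta)$ respectively. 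Two bookkeeping points need care but are standard: each $G'_i$ is nonempty because $G$ has no isolated edges, forcing $P_2\ge m/2$ and hence $p^{3}P_2\ge1$ for non-trivial $m$; and $L$ depends on the unknown $\alpha$, which I would handle by maintaining geometrically spaced guesses and selecting the appropriate one after the stream, consistent with the $\log n$ factor in the query bound.
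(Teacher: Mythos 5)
Your proposal takes a genuinely different route --- a Horvitz--Thompson estimator $W_i=\tfrac13 p^{-3}P_2(G'_i)Z_i$ over vertex-sampled induced subgraphs --- and it fails exactly at the step you yourself flag as the main obstacle: the second-moment bound $\E[P_2(G')T_3(G')]=O(p^{6}P_2T_3)$ is false for your choice of $p$. Take the ``book'' (friendship) graph: two adjacent vertices $u,v$ with $B$ common neighbours $x_1,\dots,x_B$, so $m=2B+1$, $T_3=B$, $P_2=\Theta(B^{2})$, $\alpha=\Theta(1/B)$, and no isolated edges or vertices. A 2-path $(x_j,u,x_k)$ and a triangle $\{u,v,x_i\}$ with $i\notin\{j,k\}$ share exactly one vertex; there are $\Theta(B^{3})$ such pairs, each surviving with probability $p^{5}$, so $\E[P_2(G')T_3(G')]=\Theta(p^{5}B^{3})$ while $p^{6}P_2T_3=\Theta(p^{6}B^{3})$. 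Hence $\E[W_i^{2}]=\Theta\bigl(T_3^{2}/(p\alpha)\bigr)$, a factor $1/p=\Theta(\varepsilon^{1/2}m^{1/4})$ beyond your target, and Chebyshev with batches of size $\Theta(\tfrac{1}{\varepsilon^{2}\alpha})$ gives nothing (the median step cannot repair an under-sized batch). Intuitively, conditioning on $Z_i=1$ forces $u$ and $v$ to be retained, which inflates $P_2(G'_i)$: your weight and your indicator are strongly positively correlated. Neither of your proposed rescues works: this graph has $\Theta(m)$ pairwise independent 2-paths, so the paper's Theorem~\ref{thm:lb2paths} is satisfied and yet the overlap terms still dominate for every $p=o(1)$; and its transitivity coefficient is not so small that any rule consistent with your space budget forces $p=1$ --- indeed your stated $p$ does not depend on $\alpha$ at all.

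The paper avoids this trap by never multiplying the triangle indicator by a quantity measured on the sparsified graph. Each copy contributes only the bounded indicator ``is a uniformly random 2-path of $G_S$ a triangle'', and it is used only when $G_S$ contains at least $18/\varepsilon^{2}$ pairwise independent 2-paths. This is where the $\Omega(m)$ independent-2-paths bound actually enters: via Lemma~\ref{lem:sampl_prob} it guarantees that this conditioning event has probability at least $3/4$ at sampling rate $\Theta(1/(\varepsilon\sqrt m))$ --- not via any variance computation. Lemma~\ref{lem:estimate} then shows the conditional selection probabilities of any two 2-paths differ by at most a $1\pm\varepsilon$ factor, because conditioning on one monochromatic 2-path perturbs the count of monochromatic 2-paths by $O(\sqrt i)$ against $i\ge 18/\varepsilon^{2}$; the true $P_2$ is estimated by a separate $F_2$ sketch (Lemma~\ref{lem:second_moment_est}) and multiplied in at the end. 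This self-normalized (ratio) form keeps the per-copy variance at most $\E[X_i]\approx\alpha$ on \emph{every} admissible graph, including the one above. To salvage your architecture you would have to replace $p^{-3}P_2(G'_i)Z_i$ by an analogous ratio estimator and prove a bias bound of the Lemma~\ref{lem:estimate} type; the unbiasedness and resource accounting in your write-up are fine, but the variance claim as stated cannot be proved.
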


(For simplicity, we assume that there are no isolated edges in $G$. More generally, the result holds by replacing $n$ with $n_C$, where $n_C$ is the number of vertices in connected components with at least two edges. We assume $m$ and $n$ can be described in $O(1)$ words.)

\begin{table} 
\begin{center} 
%\vspace{-5mm}
\begin{tabular}{ p{4cm} | c | c }
  \hline   
  & Space & Update time\\    
  \hline   
  Ahn et al.\cite{ahn_et_al} & $O(\frac{mn}{\varepsilon^2 T_3} 1 \log \frac{1}{\delta})$ & $O(n \log n)$\\    
  \hline   
  Manjunath et al.~\cite{cycles_counting} & $O(\frac{m^3}{\varepsilon^2 T^2_3}  \log \frac{1}{\delta})$ & $O(\frac{m^3}{\varepsilon^2 T^2_3} \log \frac{1}{\delta})$\\    
  \hline   
  Section~\ref{sec:lower} &  $\Omega(\frac{m^3}{T^2_3})$  &  --- \\
  \hline          
  Section~\ref{sec:upper} &  $O(\frac{\sqrt{m}}{\varepsilon^3 \alpha} \log \frac{1}{\delta})$ & $O(\frac{1}{\varepsilon^2 \alpha} \log \frac{1}{\delta})$\\
  \hline   
\end{tabular} 
 %\vspace{-1mm} 
\caption{Overview of time and space bounds.}
\label{dtc_tab:results}
\end{center}
\end{table}

\begin{table} 
\begin{center} 
%\vspace{-15mm}
\begin{tabular}{ p{2.5cm} | p{4.5cm} | p{3cm} }
  \hline   
  \hspace{9mm} & \hspace{5mm}$n \log n$ & \hspace{5mm}$m^3/T^2_3$\\
  \hline   
  \hspace{8mm}$z < 1/2$ & \hspace{5mm}$T_3 = \omega({C^2}/({n^{2z}\log n}))$ & \hspace{3mm}$T_3 = o(Cn^{2-z})$\\    
  \hline   
  \hspace{4mm}$1/2 < z < 1$ & \hspace{5mm}$T_3 = \omega({C^2}/({n\log n}))$ & \hspace{3mm}$T_3 = o(Cn^{3 - 3z})$\\    
  \hline   
  \hspace{9mm}$z > 1$ &  \hspace{5mm}$T_3 = \omega({C^2}/({n\log n}))$ & \hspace{3mm}$T_3 = o(C)$\\
  \hline          
\end{tabular} 
% \vspace{-1mm}  
\caption{Comparison of the theoretical guarantees for the per edge processing time for varying $z$.}
\label{dtc_tab:time}
\end{center}
%\vspace{-2mm}  
\end{table}  

Before presenting the algorithm, let us compare the above to the bounds in~\cite{ahn_et_al,cycles_counting}. The algorithm in~\cite{ahn_et_al} estimates $T_3$ by applying $\ell_0$ sampling~\cite{L0} to non-empty subgraphs on $3$ vertices. There are $O(mn)$ such subgraphs, thus $O(\frac{mn}{\varepsilon^2 T_3} \log \frac{1}{\delta})$ samples are needed for an $(\varepsilon, \delta)$-approximation. However, each edge insertion or deletion results in the update of $n-2$ non-empty subgraphs on 3 vertices. Using the $\ell_0$ sampling algorithm from~\cite{L0_sampling}, this results in processing time of $O(n \log n)$ per edge. The algorithm by Manjunath et al.~\cite{cycles_counting} (which builds upon the the work of Jowhari and Ghodsi~\cite{iran_triangles}) estimates the number of triangles (and more generally of cycles of fixed length) in streamed graphs by computing complex valued sketches of the stream.  Each of them yields an unbiased estimator of $T_3$. The average of $O(\frac{m^3}{\varepsilon^2 T^2_3} \log \frac{1}{\delta})$ estimators is an $(\varepsilon, \delta)$-approximation of $T_3$. However, each new edge insertion or deletion has to update all estimators, resulting in update time of $O(\frac{m^3}{\varepsilon^2 T^2_3} \log \frac{1}{\delta})$. The algorithm was generalized to counting arbitrary subgraphs of fixed size in~\cite{subgraph_counting}.

The time and space bounds are summarized in Table~\ref{dtc_tab:results}. Comparing our space complexity to the bounds in~\cite{ahn_et_al,cycles_counting}, we see that for several graph classes our algorithm is more time and space efficient. (We ignore $\varepsilon$ and $\delta$ and logarithmic factors in $n$ for the space complexity.) For $d$-regular graphs the processing time per edge is better than $O(n \log n)$ for $T_3 = \omega(d^2/\log n)$, and better than $O(m^3/T^2_3)$ for $T_3 = o(n^2 d)$. Our space bound is better than $O({mn}/{T_3})$  when $d = o(n^{1/3})$, and better than $O({m^3}/{T^2_3})$ for $T_3 = o(n^{3/2}d^{1/2})$. Most real-life graphs exhibit a skewed degree distribution adhering to some form of power law, see for example~\cite{power_law}. Assume vertices are sorted according to their degree in decreasing order such that the $i$th vertex has degree $C/i^z$ for some $C \le n$, and constant $z>0$, i.e., we have Zipfian distribution with parameter~$z$. It holds $\sum_{i=1}^ni^{-z} = O(n^{1-z})$ for $z<1$ and $\sum_{i=1}^ni^{-z} = O(1)$ for $z>1$. Table~\ref{dtc_tab:time} summarizes for which values of $T_3$ our algorithm achieves faster processing time than~\cite{ahn_et_al,cycles_counting}, and Table~\ref{dtc_tab:space} -- for which values of $C$ our algorithm is more space-efficient than~\cite{ahn_et_al}, and for which values of $T_3$ -- more space-efficient than~\cite{cycles_counting}.

However, the above values are for arbitrary graphs adhering to a certain degree distribution. We consider the main advantage of the new algorithm to be that it achieves constant processing time per edge for graphs with constant transitivity coefficient. This is a common assumption for real-life networks, see for instance~\cite{albert,buriol_et_al_1}.  Note that fast update is essential for real life applications. Consider for example the Facebook graph. In May 2011, for less than eight years existence, there were about 69 billion friendship links~\cite{facebook}. This means an average of above 300 new links per second, without counting deletions and peak hours.
 
In Section~\ref{sec:dtc_compare} we compare the theoretical guarantees for several real life graphs. While such a comparison is far from being a rigorous experimental evaluation, it clearly indicates that the processing time per edge in~\cite{ahn_et_al,cycles_counting} is prohibitively large and the assumption that the transitivity coefficient is constant is justified. Also, for graphs with a relatively small number of triangles our algorithm is much more space-efficient.

\begin{table}
%\vspace{-6mm}
%\begin{adjustwidth}{-1.6cm}{}
\begin{center} 
\begin{tabular}{ p{3cm} | p{5cm} | p{5cm} }
  \hline   
  \hspace{9mm} & \hspace{5mm}$mn/T_3$ & \hspace{3mm}$m^3/T^2_3$\\    
  \hline   
  \hspace{8mm}$z < 1/2$ & \hspace{5mm}$C = o(n^{1/3 + z})$ & \hspace{3mm}$T_3 = o(C^{1/2} n^{\frac{3-z}{2}}))$\\    
  \hline   
  \hspace{4mm}$1/2 < z < 1$ & \hspace{5mm}$C = o(n^{1-z/3})$ & \hspace{3mm}$T_3 = o(C^{1/2}n^{\frac{5 - 5z}{2}})$\\    
  \hline   
  \hspace{9mm}$z > 1$ &  \hspace{5mm}$C = o(n^{2/3})$ & \hspace{3mm}$T_3 = o(C^{1/2})$\\
  \hline          
\end{tabular}  
 %\vspace{-1mm} 
\caption{Comparison of the theoretical guarantees for the space usage for varying $z$.}
 \label{dtc_tab:space}
\end{center}
%\vspace{-15mm}
%\end{adjustwidth}
\end{table}

\section{Our algorithm}\label{sec:upper}
The main idea behind our algorithm is to design a new sampling technique for dynamic graph streams. It exploits a combination of the algorithms by Buriol et al.~\cite{buriol_et_al_1} for the incidence stream model, and the Doulion algorithm~\cite{doulion} and its improvement~\cite{pagh_tsour}. Let us briefly describe the approaches. 
\\\\
{\bf The Buriol et al. algorithm for incidence list streams.}
Assume we know the total number of 2-paths in $G$. One chooses at random one of them, say $(u, v, w)$, and checks whether the edge $(u,w)$ appears later in the stream. For a triangle $\langle u, v, w \rangle$ the three 2-paths $(u, v, w)$, $(w, u, v)$, $(v, w, u)$ appear in the incidence list stream, thus the probability that we sample a triangle is exactly $\alpha$. One chooses independently at random $K$ 2-paths and using standard techniques shows that for $K = O(\frac{1}{\varepsilon^2 \alpha}\log \frac{1}{\delta})$ we compute an $(\varepsilon, \delta)$-approximation of $\alpha(G)$. One can get rid of the assumption that the number of 2-paths is known in advance by running $O(\log n)$ copies of the algorithm in parallel, each guessing the right value. The reader is referred to the original work for more details. For incidence streams, the number of 2-paths in $G$ can be computed exactly by updating a single counter, thus if $\tilde{\alpha}$ is an $(\varepsilon, \delta)$-approximation of the transitivity coefficient then $\tilde{T_3} = \tilde{\alpha}P_2$ is an $(\varepsilon, \delta)$-approximation of $T_3$. 
\\\\
{\bf Doulion and monochromatic sampling.}
The Doulion algorithm~\cite{doulion} is a simple and intuitive sparsification approach. Each edge is sampled independently with probability $p$ and added to a sparsified graph $G_S$. We expect $pm$ edges to be sampled and a triangle survives in $G_S$ with probability $p^3$, thus multiplying the number of triangles in $G_S$ by $1/p^3$ we obtain an estimate of $T_3$. The algorithm was improved in~\cite{pagh_tsour} by using {\em monochromatic sampling}. Instead of throwing a biased coin for each edge, we uniformly at random color each vertex with one of $1/p$ colors. Then we keep an edge in the sparsified graph iff its endpoints have the same color. A triangle survives in $G_S$ with probability $p^2$. It is shown that for a fully random coloring the variance of the estimator is better than in Doulion. However, in both algorithms it depends on the maximum number of triangles an edge is part of, and one might need constant sampling probability in order to obtain an $(\varepsilon, \delta)$-approximation on $T_3$. The algorithm can be applied to dynamic streams because one counts the number of triangles in the sparsified graph after all edges have been processed. However, it can be expensive to obtain an estimate since the exact number of triangles in $G_S$ is required.  %The decreased exponent also results in a better space complexity and the sampling probability required for an $(\varepsilon, \delta)$-approximation is improved to $p = \Omega(\frac{\Delta}{\varepsilon^2 T_3}\log \frac{1}{\delta})$. As one can see, the main drawback of the approach is that it is not suitable for graphs where certain edges are ``heavy", i.e., edges contained in many triangles. %However, some examples where the approach fails have also been reported~\cite{?}(koreicite).   
\\\\
{\bf Combining the above approaches.} 
The basic idea behind the new algorithm is to use the estimator of Buriol et al. for the incidence stream model: (i) estimate the transitivity coefficient $\alpha(G)$ by choosing a sufficiently large number of 2-paths at random and check which of them are part of a triangle, and (ii) estimate the number of 2-paths $P_2$ in the graph. We first observe that estimating $P_2$ in dynamic graph streams can be reduced to second moment estimation of streams of items in the turnstile model, see e.g.~\cite{f2_est_opt}. For (i), we will estimate $\alpha(G)$ by adjusting the monochromatic sampling approach. Its main advantage compared to the sampling of edges separately is that if we have sampled the 2-path $(u, v, w)$, then we must also have sampled the edge $(u, w)$, if existent. So, the idea is to use monochromatic sampling and then in the sparsified graph to pick up at random a 2-path and check whether it is part of a triangle. Instead of random coloring of the vertices, we will use a suitably defined hash function and we will choose a sampling probability guaranteeing that for a graph with no isolated edges (or rather a small number of isolated edges) the sparsified graph will contain a sufficiently big number of 2-paths. A 2-path in the sparsified graph picked up at random, will then be used to estimate $\alpha(G)$. Thus, unlike in~\cite{buriol_et_al_1}, we sample {\em after} the stream has been processed and this allows to handle edge deletions. The main technical obstacles are to analyze the required sampling probability $p$ and to show that this sampling approach indeed provides an unbiased estimator of $\alpha(G)$.  We will obtain bounds on $p$ and show that even if the estimator might be biased, the bias can be made arbitrarily small and one can still achieve an $(\varepsilon, \delta)$-approximation of $\alpha(G)$. Also, we present an implementation for storing a sparsified graph $G_S$ such that each edge is added or deleted in constant time and a random 2-path in $G_S$, if existent, can be picked up without explicitly considering all 2-paths in $G_S$.   %Then it is easy to extend the approach to estimating the number of triangles.     

\subsection{Algorithm details} \label{sec:code}

Pseudocode description of the algorithm is given in Figure~\ref{fig:alg}. We assume that the graph is given as a stream $\mathcal{S}$ of pairs $((u, v), \$$$)$,
 where $(u,v) \in E$ and $\$$ $ \in \{+,-\}$
  with the obvious meaning that the edge $(u, v)$ is inserted or deleted from $G$. In {\sc EstimateNumberOfTwoPaths} each incoming pair $((u, v), \$$$)$ is treated as the insertion, respectively deletion, of two items $u$ and $v$, and these update a second moment estimator $SME$, working as a blackbox algorithm. We refer to the proof of Lemma~\ref{lem:second_moment_est} for more details. In {\sc SparsifyGraph} we assume access to a fully random coloring hash function $f: V \rightarrow C$. Each edge $(u,v)$ is inserted/deleted to/from a sparsified graph $G_S$ iff $f(u) = f(v)$. At the end $G_S$ consists of all monochromatic edges that have not been deleted. %We assume an upper bound on the space {\sc SparsifyGraph} can use and consider only sparsified graphs with at most $t$ monochromatic edges. 
  In {\sc EstimateNumberOfTriangles} we run in parallel the algorithm estimating $P_2$ and $K$ copies of {\sc SampleRandom2Path}. For each $G^i_S$, $1 \le i \le K$, with at least $s$ pairwise independent 2-paths %and at most $t$ edges, 
  we choose at random a 2-path and check whether it is a triangle. (Note that we require the existence of $s$ {\em pairwise independent} 2-paths but we choose a 2-path at random from {\em all} 2-paths in $G_S$.) The ratio of triangles to all sampled 2-paths and the estimate of $P_2$ are then used to estimate $T_3$. In the next section we obtain bounds on the user defined parameters $C, K$ and $s$. In Lemma~\ref{lem:sampling} we present en efficient implementation of $G_S$ that guarantees constant time updates and allows the sampling of a random 2-path in expected time $O(\log n)$ and worst case time $O(\log^2 n)$ with high probability. 

%\begin{figure}
%{\sc EstimateNumberOfTwoPaths}
%\algsetup{indent=2em}
%\begin{algorithmic}[1]
%\REQUIRE stream of edge deletions and insertions $\mathcal{S}$, algorithm $SME$ estimating the second moment items streams
%\STATE $m=0$
%\FOR{each $((u, v), \$)$ in $\mathcal{S}$}
%\IF{$\$ = +$}
%\STATE $m= m+1$
%\STATE $SME.update(u, 1)$, $SME.update(v, 1)$  
%\ELSE
%\STATE $m=m-1$
%\STATE $SME.update(u, -1)$, $SME.update(v, -1)$  
%\ENDIF
%\ENDFOR
%\RETURN $SME.estimate/2 - m$
%\end{algorithmic}
%\caption{The algorithm for estimating the number of 2-paths in $G$.}  \label{fig:sme} 
%\end{figure} 

%\bigskip

%\paragraph{Estimating the number of triangles}

\begin{figure}
\vspace{-5mm}
{\sc EstimateNumberOfTwoPaths}
\algsetup{indent=2em}
\begin{algorithmic}[1]
\REQUIRE stream of edge deletions and insertions $\mathcal{S}$, algorithm $SME$ estimating the second moment items streams
\STATE $m=0$
\FOR{each $((u, v), \$$$)$ in $\mathcal{S}$}
\IF{$\$$$ = +$}
\STATE $m= m+1$
\STATE $SME.update(u, 1)$, $SME.update(v, 1)$  
\ELSE
\STATE $m=m-1$
\STATE $SME.update(u, -1)$, $SME.update(v, -1)$  
\ENDIF
\ENDFOR
\RETURN $SME.estimate/2 - m$
\end{algorithmic}
%\caption{The algorithm for estimating the number of 2-paths in $G$.}  \label{fig:sme} 

\bigskip

{\sc SparsifyGraph}
\algsetup{indent=2em}
\begin{algorithmic}[1]
\REQUIRE stream of edge deletions and insertions $\mathcal{S}$, coloring function $f: V \rightarrow C$%, threshold $t$
\medskip
\STATE $G_S = \emptyset$
\FOR{each $((u, v), \$$$) \in \mathcal{S}$}
\IF{$f(u) = f(v)$}
\IF{$\$$$ = +$}
\STATE  $G_S = G_S \cup (u, v)$.
%\IF{$E(G_S) > t$}
%\STATE Return $\emptyset$.
%\ENDIF
\ELSE 
\STATE $G_S = G_S \backslash (u, v)$.
\ENDIF
\ENDIF
\ENDFOR
\STATE Return $G_S$.
\end{algorithmic}

\bigskip

{\sc SampleRandom2Path}
\algsetup{indent=2em}
\begin{algorithmic}[1]
\REQUIRE sparsified graph $G_S$
\medskip
\STATE choose at random a 2-path $(u, v, w)$ in $G_S$
\IF{the vertices $\{u,v,w\}$ form a triangle}
\RETURN 1
\ELSE 
\RETURN 0
\ENDIF
\end{algorithmic}
%\caption{A single estimator.} 

\bigskip

{\sc EstimateNumberOfTriangles}
\algsetup{indent=2em}
\begin{algorithmic}[1]
\REQUIRE streamed graph $\mathcal{S}$, set of $K$ independent fully random coloring functions $\mathcal{F}$, algorithm $SME$ estimating the second moment of streams of items, threshold $s$% and $t$
\medskip 
\STATE run in parallel {\sc EstimateNumberOfTwoPaths}$(\mathcal{S}, SME)$ and let $\tilde{P_2}$ be the returned estimate
\STATE run in parallel $K$ copies of {\sc SparsifyGraph}$(S, f_i)$, $f_i \in \mathcal{F}$ 
\STATE $\ell = 0$
\FOR{each $G^i_S$ with at least $s$ pairwise independent 2-paths}
\STATE $X += $ {\sc SampleRandom2Path}$(G^i_S)$
\STATE $\ell += 1$
\ENDFOR
\STATE $\tilde{\alpha} = X/\ell$
\RETURN $\frac{\tilde{\alpha} \tilde{P_2}}{3}$
\end{algorithmic}
\caption{Estimating the number of 2-paths in $G$, the transitivity coefficient and the number of triangles.} 

\label{fig:alg} 
\end{figure}

%%%%%%%%%%%%%%%%%%%%%%%%%%%%%%%%%%%%%%%%%%%%%%%%%%%%%%%%%%%%%%%%%%%%%%%%%%%%%%%%%%%%%%%%%%%%%%%%%%%%
%%%%%%%%%%%%%%%%%%%%%%%%%%%%%%%%%%%%%%%%%%%%%%%%%%%%%%%%%%%%%%%%%%%%%%%%%%%%%%%%%%%%%%%%%%%%%%%%%%%%
%%%%%%%%%%%%%%%%%%%%%%%%%%%%%%%%%%%%%% ANALYSIS %%%%%%%%%%%%%%%%%%%%%%%%%%%%%%%%%%%%%%%%%%%%%%%%%%%%
%%%%%%%%%%%%%%%%%%%%%%%%%%%%%%%%%%%%%%%%%%%%%%%%%%%%%%%%%%%%%%%%%%%%%%%%%%%%%%%%%%%%%%%%%%%%%%%%%%%%
%%%%%%%%%%%%%%%%%%%%%%%%%%%%%%%%%%%%%%%%%%%%%%%%%%%%%%%%%%%%%%%%%%%%%%%%%%%%%%%%%%%%%%%%%%%%%%%%%%%%

\subsection{Theoretical analysis} \label{sec:analysis}

We will prove the main result in several lemmas. The first lemma provides an estimate of $P_2$ using an estimator for the second frequency moment of data streams~\cite{f2_est_opt}.
\begin{lemmx} \label{lem:second_moment_est}
Let $G$ be a graph with no isolated edges given as a stream of edge insertions and deletions. There exists an algorithm returning an $(\varepsilon, \delta)$-approximation of the number of 2-paths in $G$ in one pass over the stream of edges which needs $O(\frac{1}{\varepsilon^2} \log \frac{1}{\delta})$ space and $O(\log \frac{1}{\delta})$ processing time per edge.
\end{lemmx}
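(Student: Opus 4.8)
{\bf Proof proposal.}
The plan is to reduce the problem of estimating $P_2$ to a second frequency moment estimation in the strict turnstile model, for which an optimal streaming algorithm is already available in the literature (e.g.~\cite{f2_est_opt}). The key observation is that $P_2 = \sum_{v \in V} \binom{d_v}{2}$, where $d_v = |N(v)|$ is the degree of $v$ in the final graph $G$. Expanding the binomial gives
$$
P_2 = \sum_{v \in V} \frac{d_v^2 - d_v}{2} = \frac{1}{2}\left( \sum_{v \in V} d_v^2 \right) - \frac{1}{2}\sum_{v \in V} d_v = \frac{F_2}{2} - m,
$$
where $F_2 = \sum_{v} d_v^2$ is exactly the second frequency moment of the stream in which each edge insertion $((u,v),+)$ contributes one occurrence of each of the items $u$ and $v$, and each deletion removes them. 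Here we use that $\sum_v d_v = 2m$ when there are no isolated edges, which matches the {\sc EstimateNumberOfTwoPaths} pseudocode returning $SME.estimate/2 - m$.

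First I would argue that treating each edge update as two item updates (one for each endpoint) and feeding them into the blackbox $SME$ produces a turnstile stream whose second moment is precisely $F_2 = \sum_v d_v^2$, since the net frequency of item $v$ after processing the whole stream equals its final degree $d_v$. The strict turnstile assumption from the preliminaries guarantees that each edge is deleted only after being inserted, so all intermediate frequencies are nonnegative and the blackbox $F_2$ estimator applies. I would then invoke the standard result that there is a one-pass turnstile $F_2$ estimator achieving an $(\varepsilon,\delta)$-approximation in $O(\frac{1}{\varepsilon^2}\log\frac{1}{\delta})$ space with $O(\log\frac{1}{\delta})$ update time per item (hence per edge, up to a constant factor of $2$).

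The one subtlety to handle carefully is that an $(\varepsilon,\delta)$-approximation of $F_2$ does not immediately yield an $(\varepsilon,\delta)$-approximation of $P_2 = F_2/2 - m$, because the subtracted term $m$ is known exactly (it is maintained by a single counter) but the multiplicative error on $F_2$ could in principle be amplified by the subtraction. I expect this to be the main obstacle. The resolution relies on the hypothesis that $G$ has no isolated edges: this forces every vertex incident to an edge to have degree at least one, and in fact guarantees $F_2 \ge 2m$ with enough slack that the linear correction $-m$ does not dominate. Concretely, I would show $P_2 = \frac{1}{2}\sum_v d_v(d_v-1)$ is comparable to $F_2$, i.e. $F_2 = 2P_2 + 2m$ and, because the no-isolated-edge condition implies each endpoint has degree at least $1$ with the component structure forcing $P_2 = \Omega(m)$ (every non-isolated edge shares an endpoint with another edge, contributing at least one 2-path), the additive term $-m$ is absorbed into the relative error. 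Thus if $\tilde{F_2}$ is an $(\varepsilon,\delta)$-approximation of $F_2$, then $\tilde{F_2}/2 - m$ is an $(O(\varepsilon),\delta)$-approximation of $P_2$, and rescaling $\varepsilon$ by a constant recovers the claimed bounds without affecting the asymptotic space and time complexity.

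Finally I would assemble the pieces: the space is dominated by the $F_2$ sketch, $O(\frac{1}{\varepsilon^2}\log\frac{1}{\delta})$ words, plus the single counter for $m$; the per-edge processing time is two $SME$ updates, each $O(\log\frac{1}{\delta})$, giving $O(\log\frac{1}{\delta})$ total. This matches the statement of the lemma.
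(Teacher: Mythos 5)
Your proposal is correct and follows essentially the same route as the paper: reduce $P_2$ to the second frequency moment $F_2=\sum_v d_v^2$ of the vertex-item stream, maintain $m$ exactly, and use the no-isolated-edges hypothesis to get $P_2\ge m/2$ (equivalently $F_2\ge 3m$), so that the additive correction $-m$ only costs a constant-factor rescaling of $\varepsilon$. The paper carries out the same constant-tracking explicitly (showing an $(\varepsilon/c)$-approximation of $F_2$ with $c\ge 6$ suffices), which is exactly your "$(O(\varepsilon),\delta)$-approximation plus rescaling" step.
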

\begin{proof}
We show that {\sc EstimateNumberOfTwoPaths} in Figure~\ref{fig:alg} returns an $(\varepsilon, \delta)$-approximation of the number of 2-paths in $G$. 
We reduce the problem of computing the number of 2-paths in dynamic graph streams to the problem of estimating the second frequency moment in streams of items in the turnstile model. By associating vertices with items and treating each incoming pair $((u, v), \$)$ %$
 as the insertion, respectively deletion, of two new items $u$ and $v$, it is a simple observation that the second moment of the so defined stream of items corresponds to $F_2 = \sum_{v \in V} d_v^2$. For the number of 2-paths in $G$ we have $$P_2 = \sum_{v \in V} {d_v \choose 2} = (\sum_{v \in V} d_v^2)/2 - m.$$ Since $G$ contains no isolated edges, for each edge $(u, v)$ we can assume that at least one of its endpoints has degree more than one, w.l.o.g. $d_u \ge 2$. Thus, $(u, v)$ must be part of at least one 2-path, namely a 2-path centered at $u$. Each edge can be assigned thus to at least one 2-path and this implies a lower bound on the number of 2-paths in $G$, $\sum_{v \in V} {d_v \choose 2} \ge m/2$, thus $$F_2 = 2\sum_{v \in V} {d_v \choose 2} + 2m \ge 3m.$$
Let $c \ge 1$ be some constant. Assume that we have computed an $(1 \pm \varepsilon/c)$-approximation of $F_2$ and we have the exact value of $m$. Then the over- and underestimation returned by {\sc EstimateNumberOfTwoPaths} is bounded by $(\varepsilon/c) F_2$. We want to show that $(\varepsilon/c) F_2 \le \varepsilon P_2$. It holds $$(\varepsilon/c) F_2 \le \varepsilon P_2 = \varepsilon (F_2/2 - m)$$ $$F_2/c \le F_2/2 - m $$ $$F_2(c-2) \ge 2cm$$ Since $F_2 \ge 3m$, for $c\ge 6$ we obtain an $(1\pm \varepsilon)$-approximation of $P_2$. 

As for the complexity of {\sc EstimateNumberOfTwoPaths}, we observe that recording the exact value of $m$ requires constant processing time per edge and $O(\log n)$ bits. For $SME$, we use the algorithm from~\cite{f2_est_opt} for the estimation of the second moment, which computes an $(\varepsilon, \delta)$-approximation of $\sum_{v \in V} d_v^2$ with the claimed time and space complexity.  
\end{proof}
\\\\
The next two lemmas show a lower bound of $\Omega(m)$ on the number of pairwise independent 2-paths in a graph without isolated edges. The result is needed in order to obtain bounds on the required sampling probability. 
We first show that in order to obtain a lower bound for general graphs it is sufficient to
consider bipartite connected graphs.
This is true because every connected graph contains a cut with at
least~$m/2$ edges such that the bipartite graph between the two vertex
subsets is connected. We prove this first.

\begin{lemmx}\label{lem:reducteToBipartite}
  Let~$G=(V,E)$ be an arbitrary connected graph with~$|E|=m$.
  There exists a bipartition~$(U,W)$ of~$V$
  such that the bipartite graph~$B:=(V,E')$ with $E':=E\cap\{\{u,w\}\mid u\in U, w \in
  W\}$ is connected and~$|E'|\ge m/2$.
\end{lemmx}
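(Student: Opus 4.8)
The plan is to prove the existence of a bipartition $(U,W)$ of $V$ such that the cut edges $E'$ both form a connected spanning subgraph and satisfy $|E'|\ge m/2$. These two requirements pull in opposite directions: a maximum cut gives many edges but need not be connected, while insisting on connectivity alone says nothing about edge count. So the main work is to satisfy both simultaneously.

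My first step is to build the bipartition greedily along a spanning tree to guarantee connectivity, then argue the edge count separately.The plan is to construct the required bipartition by a single sequential pass over the vertices, using a greedy ``anti-majority'' rule that enforces both properties at once. The tension in the statement is that maximizing the cut (which a random assignment or a max-cut local-search argument handles, giving $\ge m/2$ crossing edges) need not produce a connected bipartite graph, while forcing connectivity directly (say by properly $2$-colouring a spanning tree) controls only the $n-1$ tree edges and can leave the cut far below $m/2$. The key idea is that a greedy rule can simultaneously guarantee at least one crossing edge per vertex (enough for connectivity) and at least half of its back-edges (enough for the count).

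Concretely, first I would fix an ordering $v_1, v_2, \ldots, v_n$ of $V$ such that every $v_i$ with $i \ge 2$ has at least one neighbour among $\{v_1, \ldots, v_{i-1}\}$; such an ordering exists because $G$ is connected (take a BFS or DFS order from any root). For each $i$ let $b_i$ be the number of neighbours of $v_i$ among the earlier vertices, so that $\sum_{i=1}^n b_i = m$, each edge being counted exactly once at its later endpoint.

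Then I assign the vertices to $U$ or $W$ in this order. Having placed $v_1, \ldots, v_{i-1}$, I put $v_i$ on the side opposite to the majority of its already-placed neighbours (breaking ties arbitrarily). If $a$ and $b$ denote the numbers of earlier neighbours already lying in $U$ and in $W$, this choice makes exactly $\max(a,b)$ of the $b_i = a+b$ back-edges of $v_i$ cross the cut. Hence $v_i$ contributes $\max(a,b) \ge \lceil b_i/2\rceil \ge b_i/2$ crossing edges, and for every $i \ge 2$ (where $b_i \ge 1$) it contributes at least one crossing edge.

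Summing $\max(a,b) \ge b_i/2$ over all $i$ gives $|E'| \ge \tfrac12 \sum_i b_i = m/2$, the edge bound. For connectivity, for each $i \ge 2$ I select one crossing back-edge $\{v_i, v_j\}$ with $j < i$ (one exists by the ``at least one'' bound); these $n-1$ edges all lie in $E'$ and form a spanning tree rooted at $v_1$, so $B=(V,E')$ contains a spanning connected subgraph and is therefore connected. I expect the only delicate point to be the bookkeeping of the majority rule, namely verifying that ``opposite to the majority'' yields $\max(a,b)$ crossing back-edges and hence both $\ge 1$ and $\ge b_i/2$ at once — which is precisely the step where connectivity and the edge count are reconciled.
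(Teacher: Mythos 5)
Your proof is correct, and it takes a genuinely different route from the paper's. The paper argues in two phases: first a local-search max-cut (repeatedly move any vertex that has more neighbours on its own side to the other side; each move strictly increases the cut, so the process terminates with at least $m/2$ crossing edges), and then a separate repair phase for connectivity (if $B$ has several components, flip the bipartition inside one component; this preserves all existing crossing edges, adds at least one new crossing edge because $G$ is connected, and strictly decreases the number of components, so finitely many flips suffice). Your single-pass greedy over a BFS/DFS order achieves both properties simultaneously: placing each $v_i$ opposite the majority of its already-placed neighbours yields $\max(a,b)\ge\lceil b_i/2\rceil$ crossing back-edges, which sums to $\ge m/2$, and since $b_i\ge 1$ for $i\ge 2$ every non-root vertex contributes a crossing back-edge, giving an explicit spanning tree inside $E'$. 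What your approach buys is economy: one constructive pass, no termination argument beyond the loop itself, connectivity for free, and an obvious linear-time algorithm. What the paper's approach buys is a stronger structural conclusion from phase one: at termination of the local search, \emph{every} vertex has at least half of \emph{all} its neighbours (not just the earlier ones in some ordering) across the cut, a local-optimality property that your sequential rule does not guarantee; the price is the extra component-merging phase and its potential-function bookkeeping. Both arguments are complete and correct; for the purposes of Lemma~\ref{lem:reducteToBipartite} (which only feeds the $\Omega(m)$ bound of Theorem~\ref{thm:lb2paths}), your version suffices.
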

\begin{proof}
  Let~$G=(V,E)$ be a connected graph. We first show that there exists
  $(U,W)$ such that~$B$ contains at least~$m/2$ edges.
  To this end, initialize~$U:=\emptyset$ and~$W:=V$.
  Now, as long as there either exists a vertex~$v\in W$ with~$|N_G(v)\cap W| >
  |N_G(v)\cap U|$ or a vertex~$v\in U$ with~$|N_G(v)\cap U| >
  |N_G(v)\cap W|$, we exchange this vertex~$v$, that is, in the first
  case, we add~$v$ to~$U$ and delete it from~$W$ and vice versa in the
  second case. Clearly, this increases the number of edges between~$U$
  and~$W$; hence, this procedure terminates after a finite number of steps.
  Moreover, after termination, it holds~$|N_G(v)\cap W| \le
  |N_G(v)\cap U|$ for each~$v\in W$ and $|N_G(v)\cap U| \le
  |N_G(v)\cap W|$ for each~$v\in U$. Thus, there are at least~$m/2$
  edges in~$B$.
  
  Assume now that~$B$ is not connected and let~$B_1,\ldots,B_c$ denote
  the~$c\ge 2$ connected components of~$B$ and let~$U_i\subseteq U$,
  $W_i\subseteq W$ be the bipartition of~$B_i$. We simply ``swap'' an
  arbitrary component, say~$B_1$, that is, we delete the
  vertices~$U_i$ from~$U$ and add them to~$W$ and delete the
  vertices~$W_i$ from~$W$ adding them to~$U$. Clearly,
  the edges between~$U_i$ and~$W_i$ remain in~$B$. Additionally,
  the edges between~$U_i$ and~$U\setminus U_i$ as well as the edges between~$W_i$
  and~$W\setminus W_i$ are added to~$B$. Hence, the number of edges in~$B$
  increased. Moreover, since~$G$ is connected, the component~$B_1$ is
  now connected to some other component in $B$ and thus the number of
  connected components decreased. This proves the claim.
\end{proof}
\\\\
We also need, for the case where there are few edges in the graph, the
following lower bound.
\begin{lemmx}\label{lem:trees}
 Let $G=(V,E)$ be a connected graph.
 The number of \emph{independent 2-paths} in $G$ is at least $|V|/2-1$.
\end{lemmx}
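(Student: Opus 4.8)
The plan is to first reduce to the case where $G$ is a tree and then argue by induction on the number of vertices. Since $G$ is connected it has a spanning tree $T=(V,E_T)$ with the same vertex set. Every 2-path of $T$ is a 2-path of $G$, and the relation ``shares at most one vertex'' is inherited under passing to a subgraph; hence any family of pairwise independent 2-paths in $T$ is also such a family in $G$. It therefore suffices to prove the bound $|V|/2-1$ for trees, which is convenient because deleting suitable pendant vertices from a tree keeps it connected.

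For the tree case I would induct on $n=|V|$. The base cases $n\in\{1,2\}$ are immediate since $n/2-1\le 0$. For the inductive step, root $T$ at an arbitrary vertex and let $v$ be a leaf of maximum depth, with parent $u$; by maximality every child of $u$ is a leaf. I distinguish two cases. If $u$ has a second child $v'$, I take the 2-path $(v,u,v')$ and delete the two leaves $v,v'$. If $u$ has only the child $v$, then (unless $n=2$) $u$ has a parent $w$, and since $u$'s only neighbours are $v$ and $w$, I take the 2-path $(v,u,w)$ and delete $v$ and $u$. In either case I remove two vertices whose deletion leaves a connected tree $T'$ on $n-2$ vertices, to which the induction hypothesis applies, yielding at least $(n-2)/2-1$ pairwise independent 2-paths.

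The key point for correctness is independence across the whole family. In both cases two of the three vertices of the newly chosen 2-path (namely $v,v'$ in the first case and $v,u$ in the second) lie outside $T'$, so the new 2-path can share with any 2-path of $T'$ only the single ``anchor'' vertex that survives ($u$, resp.\ $w$). Together with the inductively guaranteed pairwise independence inside $T'$, the enlarged family is pairwise independent, and its size is at least $(n-2)/2-1+1=n/2-1$, completing the induction.

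I expect the main obstacle to be bookkeeping rather than a single hard step: one must choose the deleted pair so that (i) connectivity of the remaining tree is preserved --- this is why I always delete a deepest leaf together with either a sibling leaf or its degree-two parent, i.e.\ a pendant configuration --- and (ii) at most one vertex of the new 2-path survives into $T'$, which is exactly what guarantees independence with all earlier choices. A minor point is the parity of $n$: the recursion bottoms out at $n=1$ or $n=2$, and in both parities the accumulated count meets the bound $n/2-1$ (with equality when $n$ is even). As an aside, one can also note that in a tree any two \emph{edge-disjoint} 2-paths are automatically independent, since two 2-paths sharing two vertices would force either a common edge or a cycle; this suggests an alternative route via packing edge-disjoint 2-paths, but the vertex-deletion induction above is entirely self-contained.
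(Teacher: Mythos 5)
Your proposal is correct and follows essentially the same argument as the paper: take a rooted spanning tree, repeatedly pick a deepest leaf and form a 2-path through its parent with either a sibling leaf or the grandparent, then delete the two vertices not needed later; the paper phrases this as an iterative greedy procedure while you phrase it as induction on $|V|$, but the construction and the independence argument (each new 2-path uses two vertices that disappear) are identical.
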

\begin{proof}
 This can easily be obtained by taking a rooted spanning tree $T$ of $G$. 
 Assuming $|V|\geq 3$, take any leaf $u$ of $T$ with maximum depth (distance to the root). Let $v$ be the parent of $u$. 
 If $v$ has another child $w$, then $w$ is a leaf of $T$. Select the 2-path $\{\{u,v\},\{v,w\}\}$, remove vertices $u$ and $w$ from $V$, and start again.
 Otherwise, $v$ has a parent $w$: Similarly, select the 2-path $\{\{u,v\},\{v,w\}\}$, remove vertices $u$ and $v$ from $V$, and start again. 
 The set of 2-paths thus selected is independent (each one uses 2 new
 vertices) and contains at least $(m-1)/2=(|V|-2)/2$ 2-paths.
\end{proof}
\\\\
We now prove the lower bound for bipartite connected graphs.
\begin{lemmx} \label{lem:bipartite}
 Let $G=(V,E)$ with~$|E|=m$ be a connected bipartite graph.
 The number of \emph{independent 2-paths} in~$G$ is at least~$\lfloor m/9 \rfloor$.
\end{lemmx}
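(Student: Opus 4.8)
The plan is to produce the independent 2-paths as \emph{edge-disjoint} 2-paths and to use bipartiteness to pin down exactly when such 2-paths fail to be independent. First I would record the key structural observation: in a bipartite graph, if two edge-disjoint 2-paths share two vertices, then—because there are no triangles—those two shared vertices cannot be a centre together with one of its own leaves (that would force a shared edge), so they must be the two \emph{endpoints} of both 2-paths, with the two centres lying on the opposite side. In other words, two edge-disjoint 2-paths are non-independent precisely when they are the two diagonals of a common 4-cycle, i.e.\ they have the same unordered endpoint-pair. Consequently, \emph{any} family of edge-disjoint 2-paths with pairwise distinct endpoint-pairs is automatically pairwise independent. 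This reduces the lemma to finding many edge-disjoint 2-paths with distinct endpoint-pairs.

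To build such a family I would take $\mathcal{D}$ to be an inclusion-maximal collection of edge-disjoint 2-paths, with centres allowed on either side of the bipartition, whose endpoint-pairs are pairwise distinct. Write $F$ for the set of endpoint-pairs used (so $|F|=|\mathcal{D}|$) and $X$ for the edges not used by any 2-path of $\mathcal{D}$, so that $m=2|\mathcal{D}|+|X|$; the target $\lfloor m/9\rfloor$ is then equivalent to the bound $|X|\le 8|\mathcal{D}|$. Maximality forces a strong local structure. For a vertex $u$ let $A_u$ be the set of neighbours reached by an \emph{unused} edge. If $u$ had two unused edges $uw,uw'$ with $\{w,w'\}\notin F$, the 2-path $(w,u,w')$ could be added to $\mathcal{D}$; hence every pair inside $A_u$ already lies in $F$, that is $\binom{A_u}{2}\subseteq F$, and the same holds for every vertex on both sides. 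Thus any vertex of large unused-degree is ``witnessed'' by a correspondingly large clique of endpoint-pairs sitting inside $F$, which forces $|F|=|\mathcal{D}|$ to be large.

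It then remains to turn these clique constraints into the quantitative bound $|X|\le 8|\mathcal{D}|$. The mechanism is that $\binom{A_u}{2}\subseteq F$ already gives $|A_u|=O(\sqrt{|F|})$, and—crucially—the constraint holds \emph{simultaneously on both sides}: if many vertices $u$ have large unused-neighbourhoods all meeting the same vertices of the opposite part, then those opposite vertices in turn have large unused-neighbourhoods, forcing a second large clique of endpoint-pairs and hence a large $|F|$. I would make this precise by double counting incidences of the form (active vertex, endpoint-pair it witnesses) and applying an AM--GM / Cauchy--Schwarz estimate to the two one-sided clique-size bounds; already the symmetric extremal configuration (all relevant unused-neighbourhoods equal) shows that the ratio $|X|/|\mathcal{D}|$ is bounded by a small constant, and the general estimate yields the stated factor $9$.

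The hard part will be exactly this last charging step. The naive one-sided bound fails badly in dense graphs, because a single reused endpoint-pair (one 4-cycle) can be the endpoint-pair of arbitrarily many edge-disjoint 2-paths, so a pair-collision can in principle eliminate many candidates at once; the only way I see to keep the total loss to a constant factor is to invoke the maximality constraint on both sides at the same time, so that abundance of unused edges on one side is paid for by a large clique of used endpoint-pairs on the other. For the degenerate small or sparse instances, where the rounding in $\lfloor m/9\rfloor$ is delicate and the cherry-packing estimate is not the binding constraint, I would simply fall back on Lemma~\ref{lem:trees}, which already supplies $|V|/2-1$ independent 2-paths and dominates $m/9$ whenever the average degree is small; the maximal-family argument above is then only needed in the dense regime.
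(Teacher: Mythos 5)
Your structural setup is correct and in fact coincides with the paper's objects: in a bipartite (hence triangle-free) graph, two edge-disjoint 2-paths fail to be independent exactly when they share their endpoint pair, so your maximal family $\mathcal{D}$ of edge-disjoint 2-paths with distinct endpoint pairs is precisely a maximal family of pairwise independent 2-paths, and maximality is equivalent to saying that every 2-path of the unused graph $H=(V,X)$ has its endpoint pair in $F$. Two smaller points before the main one: with $m=2|\mathcal{D}|+|X|$, the bound $|X|\le 8|\mathcal{D}|$ only gives $|\mathcal{D}|\ge m/10$, so you would need $|X|\le 7|\mathcal{D}|$ for $m/9$; and an additive $O(|V|)$ loss in that bound is unavoidable, so the sparse-case fallback to Lemma~\ref{lem:trees} must be calibrated to absorb it (the paper splits at $m\ge 9|V|/2$ for exactly this reason).

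The genuine gap is the final charging step, which is the entire difficulty of the lemma and which your proposed counting does not deliver. Maximality gives you only that $F_0\subseteq F$, where $F_0$ is the set of distance-2 pairs of $H$; you then need $|X|\le C(|F|+|V|)$ for a constant $C$. Carrying out the double count you describe: writing $A_u=N_H(u)$, one gets $|X|\le |V|+2\sum_{u}\binom{|A_u|}{2}=|V|+2\sum_{\{w,w'\}\in F_0}|N_H(w)\cap N_H(w')|$, and the only control on the codegree coming from the clique constraint (all pairs of common neighbors of $w,w'$ lie in $F$) is $|N_H(w)\cap N_H(w')|\le \sqrt{2|F|}+1$. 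This yields $|X|=O(|F|^{3/2}+|V|)$, i.e.\ $|\mathcal{D}|=\Omega(m^{2/3})$ rather than $\Omega(m)$. AM--GM or Cauchy--Schwarz across the two sides does not remove this square root: a single endpoint pair really can serve arbitrarily many unused 2-paths, and the clique of pairs it ``pays with'' on the opposite side is shared among all pairs having the same common neighborhood, so the symmetric configuration you check is not the extremal obstruction --- nested and overlapping common neighborhoods are. The statement you need is true, but the paper proves it by a recursive peeling argument rather than a one-shot count: pick a maximum-degree vertex $u$ of $H$ with $H$-degree $k\ge 2$; distinctness of endpoint pairs forces the $\binom{k}{2}$ covering 2-paths $P_{v,v'}$, $v,v'\in N_H(u)$, to be distinct members of $\mathcal{D}$; delete their $k(k-1)$ edges together with the at most $k^2$ unused edges incident to $N_H(u)$; verify that maximality is preserved in the remaining graph; and recurse, the local inequality $2k(k-1)\ge k^2$ giving the linear bound $|\mathcal{D}|\ge m/6-|V|/4$. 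Some such amortization over the high-codegree structure, not a global double count, is what your proposal is missing.
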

\begin{proof}
 We say that a set of independent 2-paths $\mathcal P$ is \emph{maximal} if for any 2-path $P$ of $G$ not yet in $\mathcal P$, $\mathcal P\cup\{P\}$ is not independent.  
 We first prove the following property by induction on $|E|$:
 
 {\em 
   In any bipartite graph $G=(V,E)$, if $\mathcal P$ is a maximal set of independent 2-paths, then $$ |\mathcal P| \geq  \frac m6  - \frac{|V|}4. $$
 }
     
 Let $F$ be the set of edges which are not used in any 2-path of $\mathcal P$. Note that $|\mathcal P|=\frac{m-|F|}2$. We consider the subgraph $H:=(V,F)$ of $G$.
 First, consider the easy case where all vertices have degree at most $1$ in $H$. It follows that $|F|\leq \frac{|V|}2$. Thus,  $$|\mathcal P|\geq \frac{m-\frac{|V|}2}2 \geq \frac m6-\frac{|V|}4.$$
 Otherwise, pick a vertex $u$ having maximum degree in $H$, write $k$ for its degree ($k\geq 2$), 
 and $N(u)$ for the set of its $k$ neighbors in $H$.
 Consider any two distinct vertices $v, v'\in N(u)$. Then $\{\{v,u\},\{u,v'\}\}$ is a 2-path of~$G$. 
 Since $\mathcal P$ is maximal, this 2-path must be in conflict with some selected 2-path~$P_{v,v'}\in \mathcal P$.
 So, $P_{v,v'}$ covers two vertices among $\{v,u,v'\}$. Since the
 graph is bipartite, and $P_{v,v'}$ does not use the edges $\{u,v\},\{u,v'\}\in F$,
 it follows that $P_{v,v'}$ must cover both vertices $v$ and $v'$.
 Let $F(u)\subseteq F$ be the subset of edges of $F$ which are incident to a vertex of $N(u)$. 
 Note that $|F(u)|\leq k^2$. Indeed, $N(u)$ contains $k$ vertices, all of them having degree at most $k$ in $H$.
 Let $P(u)\subseteq E\setminus F$ be the subset of $E$ containing all
 edges used by a 2-path $P_{v,v'}\in \mathcal P$ for any pair~$v,v'\in N(u)$, then  $|P(u)|=2\binom{k}{2}=k(k-1)$. 
 Indeed, there are $\binom{k}{2}$ pairs $v,v'\in N(u)$, and each of them yields a 
 unique 2-path $P_{v,v'}$ containing two edges (two 2-paths cannot
 share any edge since they are independent).
 Now, we define $$E':=E\setminus (P(u)\cup F(u)) \text{ and }\mathcal P':=\mathcal P\setminus \{P_{v,v'} \mid v\neq v'\in N(u)\},$$ 
 and claim that $\mathcal P'$ is a maximal set of 2-paths of the
 subgraph $G':=(V,E')$. Assume towards a contradiction that there is a
 2-path $P^*$ in $G'$ which 
 is not in conflict with any 2-path of $\mathcal P'$. Since $G'$ is a subgraph of $G$, and $\mathcal P$ is maximal, it follows that $P^*$ is in conflict 
 with some $P_{v,v'}\in \mathcal P \setminus \mathcal P'$. Thus, $P^*$
 contains at least one of the two vertices $v,v'$, and must contain an
 edge in $F(u)\subseteq E\setminus E'$;
 a contradiction. Hence, $\mathcal P'$ is maximal for the bipartite graph $G'$ and, by induction, we have $$|\mathcal P'|\geq \frac{|E'|}6-\frac{|V|}4.$$
 Putting everything together, we have
 \[|\mathcal P'| = |\mathcal P|-\binom{k}{2} \text{ and } |E'| \geq |E|-k(k-1)-k^2,\]
which yields
 \begin{align*}
   |\mathcal P| &= |\mathcal P'| + \binom{k}{2} \geq \frac{|E'|}{6}-\frac{|V|}{4}+\binom{k}{2}\\
   &\ge \frac{|E|-k(k-1)-k^2}6 - \frac{|V|}4 + \binom{k}{2}\\
   &= \frac{|E|}6 - \frac{|V|}4 + \frac{2k(k-1)-k^2}6\\
   &\geq  \frac{|E|}6 - \frac{|V|}4 \text{\quad (since  $2k(k-1) \geq {k^2}$ for all $k\geq 2$).}    
 \end{align*}
 This completes the proof of the induction property.  

 To prove the lemma, it remains to distinguish between the following two cases:
 \begin{itemize}
  \item If $m=|E|\geq \frac{9|V|}2$, then with the induction property above, 
    any maximal set of independent 2-paths has size at
    least $$\frac{m}6-\frac{|V|}4 \ge \frac{m}6 - \frac14\cdot\frac{2m}{9}=\frac {3m-m}{18} \geq\left\lfloor \frac m9 \right\rfloor. $$
  \item If $m=|E|<  \frac{9|V|}2$, then with Lemma~\ref{lem:trees},
    there exists a set of independent 2-paths with size at least
    $\frac{|V|}2-1 \ge \lceil\frac m9 - 1\rceil = \left\lfloor \frac m9 \right\rfloor$.
 \end{itemize}
\end{proof}
\\\\
Combining Lemmas~\ref{lem:reducteToBipartite} and \ref{lem:bipartite} we obtain that any connected graph has at least $\left\lfloor \frac m{18} \right\rfloor$ independent 2-paths. For dense graphs, Lemma~\ref{lem:bipartite} gives a lower bound of $\frac m{12}-\frac {|V|}4\sim \frac m{12}$. This yields the following result:

\begin{thmx} \label{thm:lb2paths}
  Let $G=(V,E)$ with~$|E|=m$ be a connected graph. The number of
  \emph{independent 2-paths} in~$G$ is in~$\Omega(m)$.
\end{thmx}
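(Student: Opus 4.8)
The plan is to reduce the general connected case to the bipartite case already settled in Lemma~\ref{lem:bipartite}, using the cut-extraction guaranteed by Lemma~\ref{lem:reducteToBipartite}. All the combinatorial difficulty lives in those two prior lemmas, so the theorem itself is essentially an assembly step: I just have to check that the count transfers cleanly from the bipartite subgraph back to $G$.

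First I would apply Lemma~\ref{lem:reducteToBipartite} to the connected graph $G$ to obtain a bipartition $(U,W)$ of $V$ whose crossing edges $E':=E\cap\{\{u,w\}\mid u\in U,\,w\in W\}$ form a \emph{connected} bipartite graph $B=(V,E')$ with $|E'|\ge m/2$. Next I would feed $B$ into Lemma~\ref{lem:bipartite}, which yields a set of at least $\lfloor |E'|/9\rfloor \ge \lfloor m/18\rfloor$ pairwise independent 2-paths lying entirely within $B$.

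The final step is to argue that these 2-paths remain valid and pairwise independent when viewed inside $G$. This is immediate: $B$ is a subgraph of $G$ on the same vertex set with $E'\subseteq E$, so every 2-path of $B$ is a 2-path of $G$ using exactly the same three vertices, and independence is defined purely in terms of the number of shared vertices, a property that is preserved when we embed the 2-paths into a supergraph. Hence $G$ contains at least $\lfloor m/18\rfloor$ independent 2-paths, which is $\Omega(m)$.

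There is no real obstacle remaining at this stage; the only thing to be careful about is the direction of the subgraph inclusion (one passes from $G$ \emph{down} to $B$ by deleting edges, and the 2-paths are counted in $B$ and then lifted back \emph{up} to $G$). I would also remark that the constant $1/18$ is not tight: for dense graphs the bound $\tfrac{m}{12}-\tfrac{|V|}{4}$ extracted from the induction in Lemma~\ref{lem:bipartite} gives roughly $m/12$ directly without the factor-$2$ loss from the cut, but any fixed positive constant already establishes the claimed $\Omega(m)$ asymptotics.
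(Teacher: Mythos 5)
Your proposal is correct and follows exactly the paper's own argument: apply Lemma~\ref{lem:reducteToBipartite} to extract a connected bipartite subgraph with at least $m/2$ edges, invoke Lemma~\ref{lem:bipartite} to find $\lfloor m/18 \rfloor$ independent 2-paths there, and observe that these lift back to $G$ unchanged. Even your closing remark about the $\frac{m}{12}-\frac{|V|}{4}$ bound for dense graphs mirrors the paper's own comment, so there is nothing to add.
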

Next we obtain bounds on the sampling probability such that we can guarantee there are sufficiently many pairwise independent 2-paths in $G_S$. As we show later, this is needed to guarantee that {\sc SampleRandom2Path} will return an almost unbiased estimator of the transitivity coefficient. The events for two 2-paths being monochromatic are independent, thus the next lemma follows from Theorem~\ref{thm:lb2paths} and Chebyshev's inequality. Note that we still don't need the coloring function $f$ to be fully random.

\begin{lemmx} \label{lem:sampl_prob}
Let $f$ be 6-wise independent and $p \ge \frac{5}{\varepsilon\sqrt{b}}$ where $b$ is the number of independent 2-paths in $G$ and $\varepsilon \in (0,1]$. Then with probability at least 3/4 {\sc SparsifyGraph} returns $G_S$ such that there are at least $18/\varepsilon^2$ independent 2-paths in $G_S$.
\end{lemmx}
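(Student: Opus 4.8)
The plan is to fix, once and for all, a collection $\mathcal{P} = \{P_1,\ldots,P_b\}$ of $b$ pairwise independent 2-paths of $G$ (one exists by the definition of $b$), and to track how many of them survive the sparsification. A 2-path $P_i=(u,v,w)$ survives in $G_S$ exactly when both of its edges are retained, i.e.\ when $f(u)=f(v)=f(w)$. Writing $X_i$ for the indicator of this monochromaticity event and $X=\sum_{i=1}^{b}X_i$, the number of independent 2-paths in $G_S$ is at least $X$, because two 2-paths that are independent in $G$ (sharing at most one vertex) remain independent after edges are deleted. First I would record that a single 2-path is monochromatic with probability $p^2$: its three vertices must receive one common colour out of the $1/p$ colours, and $3$-wise independence of $f$ makes this probability exactly $p^2$. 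Hence $\E[X]=bp^2\ge 25/\varepsilon^2$ by the hypothesis $p\ge 5/(\varepsilon\sqrt{b})$.

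The crux of the argument, and the step I expect to be the main obstacle, is to establish that the events $\{X_i=1\}$ are pairwise independent; this is precisely where $6$-wise independence of $f$ is needed. Any two members of $\mathcal{P}$ share at most one vertex, so a pair $P_i,P_j$ together spans either $6$ or $5$ distinct vertices. If they are vertex-disjoint, the two monochromaticity events depend on disjoint sets of $3$ vertices each, and $6$-wise independence makes them independent, giving $\Pr[X_i=X_j=1]=p^2\cdot p^2=p^4$. If they share a single vertex, then both 2-paths being monochromatic forces all five involved vertices to take the shared vertex's colour; $5$-wise independence then yields $\Pr[X_i=X_j=1]=p^4$ as well. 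In either case the joint probability factors as $p^2\cdot p^2$, so $\mathrm{Cov}(X_i,X_j)=0$, and all cross terms vanish.

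With pairwise independence in hand, $\Var[X]=\sum_{i=1}^{b}\Var[X_i]=bp^2(1-p^2)\le \E[X]$. I would finish with Chebyshev's inequality applied to the downward deviation $\E[X]-18/\varepsilon^2$. Bounding the failure probability by
\[
\Pr[X<18/\varepsilon^2]\ \le\ \frac{\Var[X]}{\bigl(\E[X]-18/\varepsilon^2\bigr)^2}\ \le\ \frac{\E[X]}{\bigl(\E[X]-18/\varepsilon^2\bigr)^2},
\]
and noting that the right-hand side is decreasing in $\E[X]$ for $\E[X]>18/\varepsilon^2$, the worst case occurs at the threshold $\E[X]=25/\varepsilon^2$. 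There the gap $\E[X]-18/\varepsilon^2=7/\varepsilon^2$ is a fixed multiple of $\sigma(X)\le\sqrt{\E[X]}\le 5/\varepsilon$, so Chebyshev gives a constant failure probability of at most $1/4$; the specific constants $5$ and $18$ are tuned exactly so that this gap is a large enough multiple of $\sigma(X)$. Consequently $X\ge 18/\varepsilon^2$ with probability at least $3/4$, and since the surviving 2-paths stay pairwise independent, $G_S$ contains at least $18/\varepsilon^2$ independent 2-paths with the claimed probability.
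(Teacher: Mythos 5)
Your proposal matches the paper's proof essentially step for step: fix a family of $b$ pairwise independent 2-paths, note each survives sparsification with probability $p^2$, derive pairwise independence of the survival events from the 6-wise (respectively 5-wise, when a vertex is shared) independence of $f$, conclude $\Var[X]\le\E[X]$, and finish with Chebyshev. The only caveat, which you inherit from the paper itself (its proof hides the computation behind ``with some algebra''), is that the worst-case Chebyshev bound is $\frac{25\varepsilon^2}{49}$, which is at most $1/4$ only for $\varepsilon\le 7/10$; for $\varepsilon$ close to $1$ neither your argument nor the paper's literally attains the claimed $3/4$ with these exact constants.
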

\begin{proof}
Let $D_2$ be a set of pairwise independent 2-paths in $G$, $|D_2| = b$. Clearly, each 2-path in $D_2$ is monochromatic with probability $p^2$. Consider two pairwise independent 2-paths $(u_1, v_1, w_1)$ and $(u_2, v_2, w_2)$. If they do not share a vertex, then since $f$ is 6-wise independent, the two of them are monochromatic with probability $p^4$. Otherwise, assume w.l.o.g. $u_1 = u_2$. Under the assumption that $f(u_1) = c$ for some $c \in C$, we have that evaluating $f$ on $v_i, w_i$, $i=1,2$, is 5-wise independent. Thus, $\Pr[f(v_i)=f(w_i) = c] = p^4$. The events of sampling the two 2-paths are thus independent. We introduce an indicator random variable $X_{(u,v,w)}$ for each 2-path $(u, v, w) \in D_2$ denoting whether $(u,v,w) \in G_S$ and set $X = \sum_{(u, v, w) \in D_2} X_{(u,v,w)}$.  We have $\E[X] \ge p^2 b = 25/\varepsilon^2$ and since the events that any two 2-paths in $D_2$ are sampled are independent, we have $V[X] \le \E[X]$. From Chebyshev's inequality with some algebra we then obtain that with probability at least $3/4$ we have at least $18/\varepsilon^2$ monochromatic pairwise independent 2-paths in $G_S$. 
\end{proof}

\begin{lemmx} \label{lem:estimate}
Assume we run {\sc EstimateNumberOfTriangles} with $s = 18/\varepsilon^2$ and let $X$ be the value returned by {\sc SampleRandom2Path}. Then $(1-\varepsilon)\alpha \le \E[X] \le (1+\varepsilon)\alpha$. 
\end{lemmx}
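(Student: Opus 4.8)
The plan is to fix the random coloring $f$ first and reduce the claim to a concentration property of the sparsified graph. Write $T_S := 3T_3(G_S)$ for the number of 2-paths of $G_S$ that lie on a triangle and $P_S := P_2(G_S)$ for the total number of 2-paths of $G_S$. Once $f$ is fixed, {\sc SampleRandom2Path} picks a uniform 2-path of $G_S$ and returns $1$ exactly when it is closed, so it outputs $1$ with probability $T_S/P_S = \alpha(G_S)$; hence $\E[X] = \E_f[T_S/P_S]$, the expectation being over colorings (conditioned on the event $E_s$ that $G_S$ has at least $s=18/\varepsilon^2$ pairwise independent 2-paths, which is the filter used in the algorithm). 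Since a triangle, and likewise a single 2-path, survives monochromatic sampling precisely when its (three) vertices receive a common color, linearity gives $\E[T_S]=p^2\cdot 3T_3 = p^2\alpha P_2$ and $\E[P_S]=p^2 P_2$, so $\E[T_S]/\E[P_S]=\alpha$. The whole difficulty is that $\E[T_S/P_S]$ is \emph{not} the ratio of expectations: the estimator is biased, and I must bound the bias by $\varepsilon\alpha$.

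Writing $T_S = 3\sum_{\tau}\mathbb 1[\tau\in G_S]$ as a sum over the triangles $\tau$ of $G$ (each surviving triangle contributing its three closed 2-paths) and conditioning on each $\tau$ surviving, I would pass to the exact identity
\[
\frac{\E[X]}{\alpha} \;=\; \E[P_S]\cdot\frac{1}{T_3}\sum_{\tau}\E\!\left[\frac{1}{P_S}\;\Big|\;\tau\in G_S\right],
\]
which reduces the lemma to showing that $\E[\,1/P_S\mid \tau\in G_S\,]=(1\pm O(\varepsilon))/\E[P_S]$ for every triangle $\tau$. This in turn splits into two sub-claims: (a) conditioned on $\tau\in G_S$, the count $P_S$ is concentrated around its mean with relative variance $O(\varepsilon)$, so that $\E[1/P_S\mid\tau]\approx 1/\E[P_S\mid\tau]$ by a second-order expansion (legitimate since $P_S\ge s$ on $E_s$, keeping $1/P_S$ bounded); and (b) the conditioning inflates the mean only mildly, $\E[P_S\mid\tau\in G_S]\le(1+O(\varepsilon))\E[P_S]$. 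The single quantitative input that powers both is that, by Theorem~\ref{thm:lb2paths}, the number $b$ of independent 2-paths of $G$ is $\Omega(m)$, and $p\ge 5/(\varepsilon\sqrt b)$, whence $\E[P_S]=p^2P_2\ge p^2 b\ge 25/\varepsilon^2$ is large of order $1/\varepsilon^2$. Rescaling $\varepsilon$ by a constant then yields $(1-\varepsilon)\alpha\le\E[X]\le(1+\varepsilon)\alpha$; the conditioning on $E_s$ (which holds with probability at least $3/4$ by Lemma~\ref{lem:sampl_prob}) would be folded in as a further $O(\varepsilon)$ perturbation, again using the deterministic bound $P_S\ge s$ on the complementary event.

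The main obstacle is the concentration sub-claim (a). In contrast to the pairwise independent 2-paths of Lemma~\ref{lem:sampl_prob}, which under a $6$-wise independent coloring are genuinely uncorrelated and therefore satisfy the clean bound $\Var\le\E$, the full count $P_S=\sum_{v}\binom{d^S_v}{2}$ lumps together all 2-paths sharing a common center, and these are \emph{positively} correlated; for graphs with high-degree vertices $\Var[P_S]$ is consequently not merely $O(\E[P_S])$, and naive Chebyshev with constant failure probability is far too weak (it cannot control the bias when $\alpha$ is tiny). The crux is to show that the calibration $p\gtrsim 1/(\varepsilon\sqrt b)$ is exactly strong enough to tame this correlated contribution: each center $v$ contributes $\binom{d^S_v}{2}$ with $d^S_v\sim\mathrm{Bin}(d_v,p)$, whose relative standard deviation is $\approx 1/\sqrt{d_v p}$ and hence small precisely in the regime where $v$ supplies a substantial share of $P_S$, and it is in summing these contributions against $\E[P_S]\ge 25/\varepsilon^2$ that the constants are spent. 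Sub-claim (b) is the easier, secondary obstacle, amounting to verifying that the triangle's three vertices, conditioned monochromatic, do not by themselves create more than an $O(\varepsilon)$ fraction of the expected 2-paths—again a statement controlled by the same lower bound on $\E[P_S]$ relative to the local degrees.
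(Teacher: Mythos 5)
Your starting point is correct and genuinely different from the paper's. You treat $X$ as the ratio estimator $T_S/P_S$ and attack the bias $\E[T_S/P_S]-\E[T_S]/\E[P_S]$ via second moments of $P_S$; your exact identity reducing the lemma to $\E[1/P_S \mid \tau\in G_S]=(1\pm O(\varepsilon))/\E[P_S]$ per triangle checks out, as does the calibration $\E[P_S]=p^2P_2\ge p^2 b\ge 25/\varepsilon^2$. The paper never considers $P_S$ as a random variable at all: it fixes the coloring outside the three vertices of a candidate 2-path and compares, \emph{pointwise}, the number of monochromatic 2-paths with and without forcing those three vertices to a common color, concluding that every 2-path (closed or open) is sampled with probability in a common interval $\bigl[\tfrac{1}{(1+\varepsilon)r},\tfrac1r\bigr]$ up to a common factor; the lemma then follows from a worst-case assignment argument with no variance computation. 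The deterministic ingredient powering that comparison is that when $G_S$ contains $i\ge 18/\varepsilon^2$ 2-paths, forcing three vertices monochromatic creates at most $3\sqrt{2i}$ new ones, because a vertex centering $t$ monochromatic 2-paths has monochromatic degree $O(\sqrt t)$.

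As a proof, however, your proposal has two genuine gaps. First, its self-declared crux --- that conditioned on $\tau\in G_S$ the count $P_S$ has relative variance $O(\varepsilon)$ --- is posed rather than proved, and your per-center binomial picture ($d_v^S\sim\mathrm{Bin}(d_v,p)$, relative deviation $1/\sqrt{d_vp}$) does not suffice to prove it: it omits the cross-center covariances, e.g.\ the 2-paths $(a,u,v)$ and $(u,v,b)$ are centered at different vertices yet share the edge $(u,v)$ and are positively correlated, each such pair contributing $\approx p^3$ to $\Var[P_S]$. The claim is in fact true, but taming these terms requires exactly the combinatorial fact the paper uses: any pair of vertices lies on $O(\sqrt{P_2})$ 2-paths, so the number of correlated pairs is $O(P_2^{3/2})$ and their total contribution is $O(p^3P_2^{3/2})\le (\varepsilon/5)\,\E[P_S]^2$ since $p\sqrt{P_2}\ge p\sqrt b\ge 5/\varepsilon$. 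In other words, the missing idea is the paper's, not the one in your sketch.

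Second, and more seriously, your treatment of the conditioning on $E_s$ is wrong as stated. The deterministic bound $P_S\ge s$ holds \emph{on} $E_s$, not on its complement, so it cannot be ``used on the complementary event''; and the conditioning error is additive,
\[
\bigl|\E[T_S/P_S \mid E_s]-\E[T_S/P_S]\bigr| \;=\; \Pr[E_s^c]\cdot\bigl|\E[T_S/P_S\mid E_s]-\E[T_S/P_S\mid E_s^c]\bigr|,
\]
which Lemma~\ref{lem:sampl_prob} as stated only bounds by $1/4$. Even invoking the stronger $\Pr[E_s^c]=O(\varepsilon^2)$ implicit in that lemma's proof, an additive $O(\varepsilon^2)$ error is not a relative $(1\pm\varepsilon)$ error once $\alpha\ll\varepsilon$, so this step fails exactly in the regime (small transitivity coefficient) where the lemma is hardest. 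To close it you would need to show $\Pr[E_s^c]\cdot\E[T_S/P_S\mid E_s^c]=O(\varepsilon\alpha)$, or do as the paper does and fold the event into a factor ($p_{G_S}$ there) common to all 2-paths, so that it cancels in the ratio between closed and open 2-paths.
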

\begin{proof}
We analyze how much differs the probability between 2-paths to be selected by {\sc SampleRandom2Path}.
Consider a given 2-path $(u, v, w)$. It will be sampled if the following three events occur:
\begin{enumerate}
\item $(u, v, w)$ is monochromatic, i.e., it is in the sparsified graph $G_S$.
\item There are $i \ge 18/\varepsilon^2$ pairwise independent 2-paths in $G_S$.
\item $(u, v, w)$ is selected by {\sc SampleRandom2Path}. 
\end{enumerate}
The first event occurs with probability $p^2$. Since $f$ is fully random, the condition that $(u,v,w)$ is monochromatic does not alter the probability for any 2-path independent from $(u,v,w)$ to be also monochromatic. The probability to be in $G_S$ changes only for 2-paths containing two vertices from $\{u,v,w\}$, which in turn changes the number of 2-paths in $G_S$ and thus the probability for $(u,v,w)$ to be picked up by {\sc SampleRandom2Path}. In the following we denote by $p_{G_S}$ the probability that a given 2-path is monochromatic and there are at least $18/\varepsilon^2$ pairwise independent 2-paths in $G_S$, note that $p_{G_S}$ is equal for all 2-paths. 

Consider a fixed coloring to $V\backslash \{u,v,w\}$. We analyze the difference in the number of monochromatic 2-paths depending whether $f(u)=f(v)=f(w)$ or not. There are two types of 2-paths that can become monochromatic conditioning on $f(u)=f(v)=f(w)$: either (i) 2-paths with two endpoints in $\{u,v,w\}$ centered at some $\{u,v,w\}$, or (ii) 2-paths with two vertices in $\{u,v,w\}$ centerer at a vertex $x \notin \{u,v,w\}$. For the first case assume w.l.o.g. there is a 2-path $(u,v,w) \in G_S$  centered at $v$ and let $d^{f(v)}_{v} = |\{z \in N(v)\backslash \{u, w\}: f(z)=f(v)\}|$, i.e., $d^{f(v)}_{v}$ is the number of $v$'s neighbors different from $u$ and $w$, having the same color as $v$. Thus, the number of monochromatic 2-paths centered at $v$ varies by $2d^{f(v)}_v$ conditioning on the assumption that $f(u)=f(v)=f(w)$. The same reasoning applies also to the 2-paths centered at $u$ and $w$. For the second case consider the vertices $u$ and $v$. Conditioning on $f(u)=f(w)$, we additionally add to $G_S$ 2-paths $(u,x_i,w)$ for which $f(x_i) = f(u)=f(w)$ and $x_i \in N(u) \cap N(w)$. The number of such 2-paths is at most $\min(d^{f(u)}_u, d^{f(w)}_w)$.  The same reasoning applies to any pair of vertices from $\{u,v,w\}$. Therefore, depending on whether $f(u)=f(v)=f(w)$ or not, the number of monochromatic 2-paths with at least two vertices from $\{u,v,w\}$ and centered at a vertex from $\{u,v,w\}$ varies between $$\sum_{y \in \{u,v,w\}}{d^{f(y)}_y \choose 2}  \hspace{6mm} \text{ and } \sum_{y \in \{u,v,w\}}{d^{f(y)}_y \choose 2} + 3d^{f(y)}_y.$$  
\\
Set $k={18}/{\varepsilon^2}$. We will bound the sampling probability range for a 2-path $(u, v, w)$. Let $\mathcal{C}$ be a coloring that makes $(u, v, w)$ monochromatic.  Assume that with probability $p_i$, $i-1$ 2-paths are colored monochromatic by $\mathcal{C}$. Under the assumption there are $i \ge k$ 2-paths in $G_S$ and following the above discussion about the number of 2-paths with at least two vertices from $\{u, v, w\}$, we see that the number of monochromatic 2-paths can vary between  $i$ and $i+ 3\sqrt{2i}$. Thus, the probability for $(u, v, w)$  to be sampled varies between $$p_{G_S} \sum_{i \ge k} \frac{p_i}{i} \hspace{6mm} \text{ and } \hspace{6mm} p_{G_S} \sum_{i \ge k} \frac{p_i}{i + 3\sqrt{2i}}.$$ We assume $G_S$ contains at least $k$ 2-paths, thus $\sum_{i \ge k} p_i = 1$ and there exists $r \ge k, r \in \mathbb{R}$ such that $\sum_{i \ge k} p_ii^{-1} = 1/r$. Thus we bound $$\sum_{i \ge k} \frac{p_i}{i + 3\sqrt{2i}} =  \sum_{i \ge k} \frac{p_i}{i(1 + 3\sqrt{{2}/{i})}} \ge \frac{1}{1 + 3\sqrt{{2}/{k}}}\sum_{i \ge k} \frac{p_i}{i} = \frac{1}{r(1 + 3\sqrt{2/k})}.$$

Assume first the extreme case that 2-paths which are not part of a  triangle are sampled with probability $\frac{1}{r}$ and 2-paths part of a triangle with probability $\frac{1}{(1 + \varepsilon)r}$.  
We have $X = \sum_{(u,v,w) \in P_2} I_{(u,v,w)}$, where $I_{(u,v,w)}$ is an indicator random variable denoting whether $(u,v,w)$ is part of a triangle.  Thus  $$\E[X] \ge \frac{p_{G_S}3T_3}{(1 + \varepsilon)r} \frac{r}{p_{G_S}P_2} = \frac{\alpha}{1+\varepsilon} \ge (1-\varepsilon)\alpha.$$ On the other extreme, assuming that we select 2-paths part of triangles with probability $\frac{1}{r}$ and 2-paths not part of a triangle with probability $\frac{1}{r(1 + \varepsilon)}$, using similar reasoning we obtain $\E[X] \le (1+\varepsilon)\alpha$.
\end{proof}
\\\\
Applying a variation of rejection sampling, in the next lemma we show how to store a sparsified graph $G_S$ such that we efficiently sample a 2-path uniformly at random and $G_S$ is updated in constant time. 
\begin{lemmx} \label{lem:sampling}
Let $G_S = (V, E_S)$ be a sparsified graph over $m'$ monochromatic edges. There exists an implementation of $G_S$ in space $O(m')$ such that an edge can be inserted to or deleted from $G_S$ in constant time with high probability. A random 2-path, if existent, can be selected from $G_S$ in expected time $O(\log n)$ and $O(\log^2 n)$ time with high probability.
\end{lemmx}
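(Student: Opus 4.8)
The plan is to equip $G_S$ with a data structure that supports $O(1)$ edge updates and draws a 2-path uniformly at random from the $P_2(G_S)=\sum_{v}\binom{d_v}{2}$ 2-paths of $G_S$ using a rejection scheme whose success probability is a constant. First I would store the edge set $E_S$ in a hash table, giving $O(1)$ expected (hence with high probability) membership, insertion and deletion and $O(m')$ space, and for every vertex $v$ keep a dynamic array of its current neighbours together with a hash index into that array; swapping a deleted neighbour with the last array entry keeps insertions and deletions in $O(1)$ time and lets me read off the degree $d_v$ and draw a uniform neighbour of $v$ in constant time.

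The difficulty in sampling a uniform 2-path is that the natural scheme---pick a centre $v$ with probability $\binom{d_v}{2}/P_2(G_S)$ and then a uniform pair of its neighbours---requires a vertex distribution that changes under every update, and maintaining it exactly (for instance with a Fenwick tree keyed by $\binom{d_v}{2}$) would cost $O(\log n)$ per update and violate the $O(1)$ update requirement. To keep updates constant I would instead bucket the vertices by the value $\lfloor\log_2 d_v\rfloor$, so that there are only $O(\log n)$ non-empty buckets: bucket $j$ holds the vertices with $2^j\le d_v<2^{j+1}$ and is again stored as an array with a hash index. With bucket $j$ I associate the \emph{capacity} $n_j w_j$, where $n_j$ is its current size and $w_j:=\binom{2^{j+1}-1}{2}$ upper-bounds $\binom{d_v}{2}$ over the bucket, and I maintain the running total $W=\sum_j n_j w_j$. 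An edge update changes one endpoint's degree by one, which moves at most one vertex between two buckets and adjusts $n_j$ and $W$ in $O(1)$; thus updates stay constant and the whole structure uses $O(m')$ space, since $G_S$ has at most $2m'$ non-isolated vertices.

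To draw a 2-path I would run the following rejection loop: sample a bucket $j$ with probability $n_j w_j/W$ by a single linear scan over the $O(\log n)$ buckets, sample a vertex $v$ uniformly from bucket $j$, and \emph{accept} $v$ with probability $\binom{d_v}{2}/w_j$, restarting the loop on rejection. Once a centre $v$ is accepted I pick two neighbours of $v$ independently and uniformly, retrying until they are distinct (possible since an accepted $v$ has $d_v\ge 2$), and output the resulting 2-path. A centre $v$ is accepted with probability proportional to $(n_j w_j/W)\cdot(1/n_j)\cdot(\binom{d_v}{2}/w_j)\propto\binom{d_v}{2}$, and, conditioned on the centre, each of its $\binom{d_v}{2}$ pairs is equally likely, so every 2-path is returned with the same probability $1/P_2(G_S)$; the case $W=0$ signals that no 2-path exists.

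The main work, and the step I expect to be the obstacle, is the running-time bound, which hinges on showing that the rejection probability is bounded away from $1$. Writing $t=2^j$, every $v$ in bucket $j$ satisfies $\binom{d_v}{2}\ge\binom{2^{j}}{2}=\tfrac{t}{4t-2}\,w_j\ge\tfrac14 w_j$, so the capacity overestimates the true weight by at most a constant factor, i.e. $P_2(G_S)=\sum_j\sum_{v\in j}\binom{d_v}{2}\ge\tfrac14 W$. Hence the probability that one pass of the loop accepts is exactly $P_2(G_S)/W\ge\tfrac14$ (and the distinct-neighbour retry succeeds with probability at least $\tfrac12$), so the number of iterations is dominated by a geometric variable with constant success probability while each iteration costs $O(\log n)$ for the bucket scan. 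This yields expected time $O(\log n)$, and since the probability of exceeding $c\log n$ iterations is at most $(3/4)^{c\log n}=n^{-\Omega(1)}$, time $O(\log^2 n)$ with high probability. The only subtlety beyond these routine estimates is handling the degenerate small-degree cases cleanly---vertices of degree $0$ or $1$ carry no 2-paths and must be excluded from the buckets---and verifying that the constant-factor capacity bound holds uniformly over every bucket, including the smallest one.
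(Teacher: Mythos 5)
Your proposal is correct and follows essentially the same strategy as the paper's proof: partition the potential centre vertices into $O(\log n)$ buckets by powers of two, select a bucket by a linear scan over bucket weights, correct the within-bucket non-uniformity by rejection sampling, and finish by drawing a uniform pair of neighbours of the accepted centre. The differences are at the implementation level, and mostly work in your favour. The paper buckets vertices by the value of $P_2(v)=\binom{d_v}{2}$ rather than by $d_v$, maintains the \emph{exact} per-bucket total $P_2(H_i)$ and selects a bucket proportionally to it; but because bucket members are stored in hash tables, it needs an additional layer of rejection sampling over hash chains (tabulation hashing, longest chain $\kappa=O(\log n/\log\log n)$) just to draw a near-uniform vertex from the chosen bucket, and its uniformity guarantee is only that every 2-path is picked with the same probability lying in a $\kappa$-dependent window. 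You instead select a bucket proportionally to the capacity upper bound $n_jw_j$ and compensate with a single rejection step, and your swap-with-last arrays yield an exactly uniform bucket member in $O(1)$; your acceptance probability $P_2(G_S)/W\ge 1/4$ is a clean constant, so each round is dominated by the $O(\log n)$ bucket scan and the output distribution is exactly uniform. One slip needs fixing: ``$O(1)$ expected (hence with high probability)'' is a non sequitur --- expected constant time for a hash-table operation does not imply constant time with high probability (with chaining, a single operation can take $\Theta(\log n/\log\log n)$ time). Since the lemma demands constant-time edge updates with high probability, you should, as the paper does, instantiate your dictionaries with a construction guaranteeing constant-time operations w.h.p., e.g.\ the backyard cuckoo hashing of Arbitman, Naor and Segev; with that substitution your argument is complete.
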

\begin{proof}
We implement $G_S$ as follows. Let $\Delta \le n$ be the maximum vertex degree in $G_S$. We maintain a hash table $H_i$ for all vertices $v \in G_S$ with $P_2(v) \in \{2^i, 2^i+1, \ldots, 2^{i+1}-1\}$, $0 \le i \le 2 \log \Delta$, i.e., there are between $2^i$ and $2^{i+1}-1$ 2-paths centered at $v$. The hash table contains a vertex $v$ as a key together with the set of its neighbors $N(v)$. We also store the vertices with only one neighbor in $G_S$ in a hashtable $H_\emptyset$. (Note that if there are no vertices in a given $H_i$, we don't maintain $H_i$.) 

In another hash table $T$ we maintain for each vertex incident to at least one sampled edge, a link to the $H_i$ it is contained in. Whenever a new edge $(u,v)$ is inserted or deleted from $G_S$, we first look-up in $T$ for the $H_i$ containing $u$ and $v$ and then update the corresponding numbers of 2-paths centered at $u$ and $v$. It may happen that we need to move $u$ and/or $v$ from a hashtable $H_i$ to a hashtable $H_{i \pm 1}$. For each $H_i$ we also maintain the total number of 2-paths centered at vertices $v \in H_i$, denote this number as $P_2(H_i)$. Implementing $T$ and the $H_i$ using the implementation from~\cite{opt_hashing}, each update takes constant time with high probability and the total space is $O(m')$.   

We sample a 2-path from $G_S$ at random as follows. We compute $P_2(G_S) = \sum_{i}H_i$ and select an $H_i$ where each $H_i$ has a chance of being picked up of $P_2(H_i)/P_2(G_S)$. This is done by generating a random number $r \in (0,1]$ and then computing a prefix-sum of $P_2(H_i)$'s until the sum reaches $r$ in time $O(\log n)$. Once we have chosen an $H_i$, we select a vertex $v \in H_i$ at random as follows. Assume we maintain the set of vertices in each $H_i$ in a dynamic dictionary $V_i$ implemented as a hashtable using tabulation hashing. We assume that the longest chain in $V_i$ is bounded by a $\kappa = O(\log n/\log \log n)$~\cite{tabulation_hashing}. We select a chain in $V_i$ at random and keep it with probability $\ell/\kappa$, where $\ell \ge 0$ is the length of the selected chain, otherwise we reject it and repeat the step until a chain is kept. Then, we select at random one of the vertices on the chain, let this be $v$. We apply one more time rejection sampling in order to decide whether we keep $v$ or not: Let $q=2^{i+1}-1$. We get the value $d_v$ from $H_i$ and keep $v$ with probability ${d_v \choose 2}/q$ and reject it with with probability $1-{d_v \choose 2}/q$. Once we keep a vertex $v$, we choose at random two of its neighbors in $G_S$ which constitutes the sampled 2-path. The expected number of sampling a chain and a random vertex until we keep a vertex is $O(\log n/\log \log n)$, thus the expected number of trials is $O(\log n)$ and with high probability we determine a 2-path in time $O(\log^2 n)$. It is easy to see that each 2-path is selected with equal probability $p$ such that $\frac{1}{2 \kappa P_2(G_S)} \le p \le \frac{1}{P_2(G_S)}$, thus we sample uniformly at random from the set $P_2(G_S)$. 
\end{proof}
\\\\
Now we have all components in order to prove the main result.
\\
\begin{proof} (of Theorem 1).
 
Assume {\sc EstimateNumberOfTriangles} runs $K$ copies in parallel of {\sc SparsifyGraph} with $p = \frac{5}{\varepsilon \sqrt{b}}$ for $b = \lfloor m/18 \rfloor$. By Lemma~\ref{lem:reducteToBipartite}, Lemma~\ref{lem:bipartite} and and Lemma~\ref{lem:sampl_prob} with probability 3/4 we have a sparsified graph with at least $s = 18/\varepsilon^2$ pairwise independent 2-paths. %Set $t = \frac{20 \sqrt{3} m}{\sqrt{b}}$. Since the number of edges does not exceed $m$, from Markov's inequality it follows that with probability at least $3/4$ there are less than $t$ edges in the sparsified graph. By the union bound it thus follows that with probability at least 1/2 we will sample a 2-path. 
Thus, we expect to obtain from $3K/4$ of them an indicator random variable. A standard application of Chernoff's inequality yields that with probability $O(2^{-K/36})$ we will have $\ell \ge K/2$ indicator random variables $X_i$ denoting whether the sampled 2-path is part of a triangle. By Lemma~\ref{lem:estimate} we have $(1-\varepsilon)\alpha \le \E[X_i] \le (1+\varepsilon)\alpha$ and as an estimate of $\alpha$ we return ${\sum_{i=1}^{\ell} X_i}/{\ell}$. Observe that $(1+ \varepsilon/3)^2 \le 1+ \varepsilon$, respectively $(1-\varepsilon/3)^2 \ge 1 - \varepsilon$. From the above discussion and applying Chernoff's inequality and the union bound, we see that for $K = \frac{36}{\varepsilon^2 \alpha} \log \frac{2}{\delta}$, we obtain an $(\varepsilon, \delta/2)$-approximation of~$\alpha$. 

By Lemma~\ref{lem:second_moment_est} we can compute an $(\varepsilon, \delta/2)$-approximation of the number of 2-paths in space $O(\frac{1}{\varepsilon^2} \log \frac{1}{\delta})$ and $O(\log \frac{1}{\delta})$ per edge processing time. It is trivial to show that this implies an $(3\varepsilon, \delta)$-approximation of the number of triangles for $\varepsilon < 1/3$. Clearly, one can rescale $\varepsilon$ in the above, i.e. $\varepsilon = \varepsilon/3$, such that {\sc EstimateNumberOfTriangles} returns an $(\varepsilon, \delta)$-approximation. 

By Lemma~\ref{lem:sampling}, each sparsified graph with $m'$ edges uses space $O(m')$ and each update takes constant time with high probability, thus we obtain that each edge is processed with high probability in time $O(K)$. Each monochromatic edge and its color can be represented in $O(\log n)$ bits. 

By Lemma~\ref{lem:sampling}, in expected time $O(\log n)$ and worst case time $O(\log^2 n)$ with high probability we sample uniformly at random a 2-path from each $G_S$ with at least $18/\varepsilon^2$ pairwise independent 2-paths. 
\end{proof}

\section{Lower bound}\label{sec:lower}

Pavan et al.~\cite{pavan_et_al} show that every streaming algorithm approximating the number of triangles in adjacency streams (edges are inserted in arbitrary order) needs space $\omega(1/\alpha(G))$.  
For the general case, we show another lower bound on the memory needed which matches the upper bound by Manjunath at al.~\cite{cycles_counting}.
Our lower bound works for the {\em promise version\/} of this problem where the algorithm is required to distinguish between the case where there are no triangles and the case where there are at least $T_3$ triangles, for a parameter $T_3$.
The behavior if the number of triangles is between $1$ and $T_3-1$ is unspecified, i.e., the algorithm may return any result.
Clearly this problem is solved as a special case by any streaming algorithm that is able to approximate the number of triangles, so our space lower bound will also apply to the setting of our upper bound.

\begin{thmx}\label{thm:lb}
Let $G=(V, E)$ be a graph over $m$ vertices.
Any one-pass streaming algorithm distinguishing between the cases where $G$ has at least $T_3$ triangles and $G$ is triangle-free, needs $\Omega(m^3/T_3^2)$ bits.
\end{thmx}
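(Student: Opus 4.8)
The plan is to prove the space lower bound by a reduction from a hard communication problem, most likely (a promise version of) \textsc{Set-Disjointness}, whose randomized communication complexity on $N$ coordinates is $\Omega(N)$. Given any one-pass streaming algorithm $\mathcal{A}$ that solves the promise problem (distinguish triangle-free from at least $T_3$ triangles) using $S$ bits of memory, I would simulate it by a communication protocol: each player feeds its portion of the edge stream into $\mathcal{A}$ and passes the current memory state (exactly $S$ bits) to the next player, who resumes $\mathcal{A}$ and finally reads off the answer. For a two-party, one-round reduction this shows $S \geq \Omega(N)$, so the whole task reduces to designing an instance-to-graph map for which $N = \Theta(m^3/T_3^2)$ while the promise gap is preserved. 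The guiding intuition is that the ``no'' instance should be a triangle-free but \emph{wedge-rich} graph (a near-complete bipartite or blow-up structure), inside which $T_3$ triangles are planted in the ``yes'' case; such graphs are exactly the ones on which second-moment-style estimators have variance $\Theta(m^3)$, which is why detecting the planted triangles should be information-theoretically costly.

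The heart of the argument is the gadget encoding an instance $(S_A,S_B)$, $S_A,S_B\subseteq[N]$, as a graph stream. I would use a \emph{shared common-neighborhood} construction: fix one block $B$ of $\Theta(T_3)$ vertices that serves as the common neighbors closing every triangle, together with index blocks $A$ and $C$ addressing the $N$ coordinates. Each coordinate $i\in[N]$ is mapped to a potential \emph{book} of $\Theta(T_3)$ triangles routed through $B$; Alice's edges are determined by $S_A$ and Bob's by $S_B$ so that the book of coordinate $i$ is fully present exactly when $i\in S_A\cap S_B$, and so that no three vertices span a triangle unless they lie in a common book. Then $S_A\cap S_B=\emptyset$ forces a triangle-free graph, while a single common element already creates $\geq T_3$ triangles, realizing the gap $0$ versus $\geq T_3$. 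Reusing the block $B$ across all coordinates is what makes the encoding economical: the $\Theta(T_3)$ edges incident to $B$ are amortized over many books, keeping the graph within the prescribed size budget while the number of independently controllable coordinates is driven up to the target $N=\Theta(m^3/T_3^2)$.

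The step I expect to be the main obstacle is precisely this packing and the accounting that accompanies it. I must choose the sizes of $A$, $B$, $C$ and the coordinate-to-edge map so that simultaneously (i) the instance respects the size budget, (ii) every ``on'' coordinate contributes $\Omega(T_3)$ genuinely distinct triangles whereas ``off'' coordinates and all cross-coordinate interactions contribute \emph{none}, so that the disjoint case is exactly triangle-free, and (iii) the coordinate count reaches the cubic target $\Theta(m^3/T_3^2)$ rather than the $\Theta(m/T_3)$ or $\Theta(m^2/T_3^2)$ that a non-shared or pairwise construction would yield. The delicate point is reconciling the heavy sharing of $B$ with the \emph{independence} of coordinates required for the communication lower bound: I must verify that Alice and Bob can each generate their half of every book from their own input alone, and argue that distinct books never accidentally close or open a triangle. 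If the two-party version cannot push the coordinate count past the budget, the natural fallback is to pass to a multi-party (three-layer) or unique-intersection promise variant of \textsc{Disjointness}, distributing the three edge classes of each triangle among the players; the streaming-to-protocol simulation then costs a constant factor in the number of messages and still yields $S=\Omega(N)=\Omega(m^3/T_3^2)$.
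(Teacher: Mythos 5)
Your outer framework (simulate the one-pass algorithm to obtain a one-way protocol, then encode a hard communication problem as a graph stream) is the standard one and is also what the paper does, but your choice of hard problem is where the proof breaks: a reduction from \textsc{Set-Disjointness} with per-coordinate gadgets cannot reach $\Omega(m^3/T_3^2)$, and the ``packing'' obstacle you yourself flag is fatal rather than technical. Concretely: for soundness, every triangle in the book of coordinate $i$ must contain at least one Alice edge that is present only when $i\in S_A$ \emph{and} at least one Bob edge that is present only when $i\in S_B$ (otherwise that triangle can be realized by a disjoint instance, ruining the triangle-free case). Since any two distinct edges of a triangle determine it, a book with $a_i$ such Alice edges and $b_i$ such Bob edges contains at most $a_i b_i$ triangles, so the promise forces $a_i b_i \ge T_3$, i.e.\ $\max(a_i,b_i)\ge\sqrt{T_3}$. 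Now, the $\Omega(N)$ lower bound for \textsc{Disjointness} (and for its unique-intersection and multiparty variants, your fallback) only holds for hard distributions in which the sets have size $\Theta(N)$; by pigeonhole at least half the coordinates are ``heavy'' for the same player, and the hard distribution contains instances in which that player's set consists of $\Theta(N)$ heavy coordinates, forcing $\Omega(N\sqrt{T_3})$ edges into the stream. The budget of $m$ edges then caps $N = O(m/\sqrt{T_3})$, which falls short of $m^3/T_3^2$ whenever $T_3 < m^{4/3}$; for example, with $T_3=m$ you would prove $\Omega(\sqrt m)$ where $\Omega(m)$ is required. This tension is also why your amortization idea cannot be rescued: if the edges incident to the shared block $B$ are reused across coordinates, their presence cannot encode Bob's per-coordinate membership, so triangles are controlled by $S_A$ alone and the protocol no longer decides disjointness; if they are per-coordinate, the accounting above applies.

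The paper avoids this wall by reducing from \textsc{Index}~\cite{ttt}, whose one-way complexity is $\Omega(a)$ and whose \emph{asymmetry} is exactly what the cubic bound needs: Bob holds a single index, so the expensive triangle-closing structure is paid for once rather than once per coordinate. Alice encodes $x\in\{0,1\}^a$ with $a=\Theta(m^3/T_3^2)$ vertex-disjoint bicliques $C_1,\dots,C_a$, each side of size $\Theta(T_3/m)$, streaming $C_i$ iff $x_i=1$; this costs only $\Theta((T_3/m)^2)$ edges per coordinate, $\Theta(m)$ in total. She sends the algorithm's state to Bob, who appends edges from $\Theta(m^2/T_3)$ fresh vertices to all vertices of $C_i$ ($\Theta(m)$ further edges), creating $\Theta\bigl((m^2/T_3)\cdot(T_3/m)^2\bigr)=\Theta(T_3)$ triangles if $x_i=1$ and none if $x_i=0$, since bicliques and stars are triangle-free. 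Note that this construction is exactly the ``cheap per-coordinate gadget plus one massive closing structure'' that \textsc{Disjointness} forbids and \textsc{Index} permits. If you replace your disjointness instance by \textsc{Index} and your book gadget by this biclique-plus-fresh-vertices gadget, the rest of your simulation argument goes through essentially verbatim.
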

\begin{proof}
We obtain the lower bound by a reduction from 1-way protocols for the {\sc Index} problem in communication complexity.
In this problem Alice is given a bit string $x\in\{0,1\}^a$ and needs to send a message to Bob who holds an index $i$, such that Bob is able to output $x_i$ with probability at least $2/3$ (where the probability is over random coin tosses made by Alice and Bob).
It is known that this problem requires a message of size $\Omega(a)$~\cite{ttt}.

From a streaming algorithm with space usage $s$ that distinguishes the cases of $0$ and $\geq T_3$ triangles we obtain a communication protocol for the indexing problem with $a = \Theta(m^3/T_3^2)$.
The reduction is as follows: Consider $a$ vertex disjoint bicliques $C_1,\dots,C_a$ with $\Theta(T_3/m)$ vertices on each side, and form the stream with edge set $\cup_{i, x_i=1} C_i$.
The number of edges in this stream is $\Theta(a (T_3/m)^2)$, which is $\Theta(m)$.
Adjusting constants we can make the number of edges less than $m/2$.
Alice can simulate the streaming algorithm on this input and send its state of $s$ bits to Bob.
In order to determine the value of $x_i$ Bob now connects $\Theta(m^2/T_3)$ new (i.e., previously isolated) vertices to all vertices in $C_i$.
Adjusting constants this gives another $m/2$ edges, and will create either no triangles (if $C_i$ was not in the stream simulated by Alice) or $\Theta((m^2/T_3)(T_3/m)^2) = \Theta(T_3)$ triangles (if $C_i$ was in the stream).
Thus, if the streaming algorithm succeeds with probability $2/3$, so does $Bob$.
In conclusion we must have a space usage of $s = \Omega(a)$ bits, which is $\Omega(m^3/T_3^2)$ as desired.
\end{proof}

\section{Comparison on the theoretical guarantee for real graphs} \label{sec:dtc_compare}
In Table~\ref{tab:real_values} we compare the theoretical guarantee on the complexity of our algorithm to the ones shown in~\cite{ahn_et_al,cycles_counting} for real graphs. We used the publicly available information from the Stanford Large Network Dataset Collection\footnote{http://snap.stanford.edu/data/index.html}. (Note that there the transitivity coefficient is called {\em fraction of closed triangles.}) As can be seen, the transitivity coefficient $\alpha$ appears to be constant. The update time of $O(n\log n)$ from~\cite{ahn_et_al} is always impractical, and  for~\cite{cycles_counting} it is almost always prohibitively large, the only exception being the Facebook graph for which the  $m^3/T^2_3$ is small. (Note that this is a graph collected from users who used the Social Circles application and does not reflect the structure of the whole social network graph.)
The space usage of our algorithm is also much better for several graphs which are not very dense with respect to the number triangles. 

\begin{table} 
\begin{adjustwidth}{-1.5cm}{}
\begin{center}
\begin{tabular}{ c || p{1.4cm} | p{1.4cm} |  p{1.4cm} | p{1.4cm} | p{1.4cm} | p{1.4cm} | p{1.4cm} }
  \hline                        
Dataset & \hspace{5mm}$n$ & \hspace{5mm}$m$ & \hspace{5mm}$T_3$ & \hspace{5mm}$\alpha$ & \hspace{2mm}${m^3}/{T^2_3}$ & \hspace{2mm}$mn/T_3$ & $\sqrt{m}/ \alpha$\\
  \hline
  \hline
  Enron & \hspace{3mm}36K & \hspace{3mm}367K &  \hspace{2mm}727K & \hspace{2mm}0.0853 & \hspace{2mm}93.5K & \hspace{2mm}{18K} & \hspace{2mm}{\bf 7.1K}\\
  \hline
  AS20000102 & \hspace{3mm}6.4K & \hspace{3mm}13.2K & \hspace{2mm}6.5K & \hspace{2mm}0.0095 &\hspace{2mm}52K & \hspace{2mm}{12.9K} & \hspace{2mm}{\bf 12.1K}\\
  \hline
  Astro-Ph & \hspace{3mm}18.7K & \hspace{3mm}396.1K & \hspace{2mm}1.35M & \hspace{2mm}0.318 & \hspace{2mm}34K & \hspace{2mm}{5.5K} & \hspace{2mm}{\bf 2K}\\
  \hline
  Cond-Mat & \hspace{3mm}23.1K & \hspace{3mm}186.9K & \hspace{2mm}173.3K & \hspace{2mm}0.2643 & \hspace{2mm}215K & \hspace{2mm}24.9K & \hspace{2mm}{\bf 1.6K}\\
  \hline
  roadNet-CA & \hspace{3mm}1.96M & \hspace{3mm}5.53M & \hspace{2mm}120.6K & \hspace{2mm}0.0603 & \hspace{2mm}11.55M & \hspace{2mm}89.8M & \hspace{2mm}{\bf 39K} \\
  \hline
  DBLP & \hspace{3mm}317K & \hspace{3mm}1.05M &  \hspace{2mm}2.2M & \hspace{2mm}0.3064 & \hspace{2mm}239K & \hspace{2mm}148K & \hspace{2mm}{\bf 3.4K}\\
  \hline 
  Oregon & \hspace{3mm}10.6K & \hspace{3mm}22K & \hspace{2mm}17.1K & \hspace{2mm}0.0093 & \hspace{2mm}37.1K & \hspace{2mm}{\bf 13.6K} & \hspace{2mm}16K\\ 
  \hline
  Facebook & \hspace{3mm}4K & \hspace{3mm}88.2K &  \hspace{2mm}1.61M & \hspace{2mm}0.2647 & \hspace{2mm}264 & \hspace{2mm}{\bf 218} & \hspace{2mm}1.1K\\
  \hline
  LiveJournal & \hspace{3mm}4M & \hspace{3mm}34.6M &  \hspace{2mm}177.8M & \hspace{2mm}0.1154 & \hspace{2mm}1.3M & \hspace{2mm}778K & \hspace{2mm}{\bf 51K}\\
  \hline
  Youtube & \hspace{3mm}1.1M & \hspace{3mm}2.9M & \hspace{2mm}3M & \hspace{2mm}0.0062 & \hspace{2mm}2.7M & \hspace{2mm}1.06M & \hspace{2mm}{\bf 275K}\\ 
  \hline 
  Amazon & \hspace{3mm}334K & \hspace{3mm}928K & \hspace{2mm}667K & \hspace{2mm}0.2052 & \hspace{2mm}1.79M & \hspace{2mm}464K & \hspace{2mm}{\bf 4.7K}\\
  \hline
  Skitter & \hspace{3mm}1.7M & \hspace{3mm}11.1M & \hspace{2mm}28.7M & \hspace{2mm}0.0053 & \hspace{2mm}1.66M & \hspace{2mm}657K & \hspace{2mm}{\bf 629K}\\
  \hline
\end{tabular} 
\vspace{1mm}
\caption{Comparison of the theoretical guarantees for several real-life graphs, $b=\max(n, P_2/n)$. The lowest space usage in each row is given in bold font.}
\label{tab:real_values}
\end{center}
\end{adjustwidth}
\end{table}
\section{Conclusions} \label{sec:concl}
We presented a novel algorithm for triangle counting in dynamic graph streams. Since the algorithm is based on sampling it is easy to extend it counting triangle of certain type in directed graphs~\cite{becchetti_et_al}. It is interesting whether similar techniques can be applied to counting more complex graph minors in dynamic graph streams like in~\cite{subgraph_counting}.

Another open question is to obtain lower bounds on the streaming complexity of triangle counting that reflect better the requirements for real graphs. The lower bound shown in Section~\ref{sec:lower} matches the upper bound from~\cite{subgraph_counting} for certain graph classes. However, as evident from the values in Table~\ref{tab:real_values}, the triangle distribution for real graphs allows more space-efficient algorithms.  

%\newpage

%
%
%
\end{document}